\documentclass[preprint]{elsarticle}

\usepackage{lineno,hyperref}
\usepackage{graphicx,amsmath,amssymb,amsthm,bm}
\modulolinenumbers[5]

\journal{}

%%%%%%%%%%%%%%%%%%%%%%%
%% Elsevier bibliography styles
%%%%%%%%%%%%%%%%%%%%%%%
%% To change the style, put a % in front of the second line of the current style and
%% remove the % from the second line of the style you would like to use.
%%%%%%%%%%%%%%%%%%%%%%%

%% Numbered
%\bibliographystyle{model1-num-names}

%% Numbered without titles
%\bibliographystyle{model1a-num-names}

%% Harvard
%\bibliographystyle{model2-names.bst}\biboptions{authoryear}

%% Vancouver numbered
%\usepackage{numcompress}\bibliographystyle{model3-num-names}

%% Vancouver name/year
%\usepackage{numcompress}\bibliographystyle{model4-names}\biboptions{authoryear}

%% APA style
%\bibliographystyle{model5-names}\biboptions{authoryear}

%% AMA style
%\usepackage{numcompress}\bibliographystyle{model6-num-names}

%% `Elsevier LaTeX' style
\bibliographystyle{elsarticle-num}
%\bibliographystyle{elsarticle-harv}
%%%%%%%%%%%%%%%%%%%%%%%

\newtheorem{definition}{Definition}
\newtheorem{proposition}{Proposition}

\newtheorem{theorem}{Theorem}
\newtheorem{corollary}{Corollary}
\newtheorem{assumption}{Assumption}

\begin{document}

\begin{frontmatter}

\title{Memory-two strategies forming symmetric mutual reinforcement learning equilibrium in repeated prisoners' dilemma game}

%% Group authors per affiliation:
\author{Masahiko Ueda\corref{mycorrespondingauthor}}
\address{Graduate School of Sciences and Technology for Innovation, Yamaguchi University, Yamaguchi 753-8511, Japan}
\ead{m.ueda@yamaguchi-u.ac.jp}

\begin{abstract}
We investigate symmetric equilibria of mutual reinforcement learning when both players alternately learn the optimal memory-two strategies against the opponent in the repeated prisoners' dilemma game.
We provide a necessary condition for memory-two deterministic strategies to form symmetric equilibria.
We then provide three examples of memory-two deterministic strategies which form symmetric mutual reinforcement learning equilibria.
We also prove that mutual reinforcement learning equilibria formed by memory-two strategies are also mutual reinforcement learning equilibria when both players use reinforcement learning of memory-$n$ strategies with $n>2$.
\end{abstract}

\begin{keyword}
Repeated prisoners' dilemma game; Reinforcement learning; Memory-two strategies
\end{keyword}

\end{frontmatter}

%\linenumbers

\section{Introduction}
\label{sec:introduction}
The prisoners' dilemma is one of the simplest situations in which rational actions of individuals do not maximize social welfare \cite{RCO1965}.
Although the best action of each agent is defection, mutual cooperation improves the utility of both agents.
On the other hand, if the prisoners' dilemma game is infinitely repeated, the situation changes.
In fact, mutual cooperation can be realized by rational behavior of each agent, and this result is known as folk theorem \cite{MaiSam2006}.
The folk theorem was also extended to a stronger version that any individually rational payoffs can be realized as subgame perfect equilibria \cite{FugMas1986}.

At the same time, it has been pointed out by experiments that the realistic agents like human beings are not necessarily rational, and theories of bounded rationality have been needed \cite{Aum1997}.
One of the mainstream is modeling agents by finite automata (agents with finite complexity) \cite{Ney1985,Rub1986,KalSta1988,AbrRub1988,BanSun1990,Ben1993,Ney1998}.
Especially, Abreu and Rubinstein found that the equilibrium payoffs realized by finite automaton selection games, where players choose finite automata as their strategies in repeated games so as to maximize their payoffs and to minimize the number of states of the finite automata lexicographically, are restricted to some small region in individually rational payoffs \cite{AbrRub1988}.
Kalai and Stanford proved that every subgame perfect equilibrium of repeated games can be approximated by a subgame perfect $\epsilon$-equilibrium of finite complexity \cite{KalSta1988}.
A slightly different approach from finite automata is modeling agents by ones with finite memory, which recall only a finite number past periods \cite{Leh1988}.
(Although there is distinction between memory and recall in computer science, we use these two words interchangeably.)
Deterministic finite-memory strategies are contained in a class of finite automata.
Sabourian and co-workers investigated how the folk theorem can be extended to finite-memory strategies \cite{Sab1998,BCS2009,BCS2016}.

Another trend of studies of bounded rationality is modeling agents as adaptive ones which gradually acquire favorable strategies.
One of the most successful approach is evolutionary game theory, where a population of individuals evolves by natural selection \cite{SmiPri1973}.
The concept of evolutionarily stable strategy, which is interpreted as stability against mutation, succeeded in strengthening the concept of Nash equilibrium.
However, it was also shown that any strategy in the infinitely repeated prisoners' dilemma game is not an evolutionarily stable strategy, and is not stable against neutral drift \cite{BoyLor1987}.
There are also studies of evolutionarily stable strategies with finite complexity \cite{FudMas1990,BinSam1992,NSE1995}.
Particularly, Binmore and Samuelson proposed a modified version of evolutionarily stable strategy and showed that such strategies must maximize the sum of payoffs of two players \cite{BinSam1992}.
Furthermore, many evolutionary simulations on finite-memory strategies have been done for various population sizes, mutation rates, and types of interaction \cite{NowSig1992,NowSig1993,BerLac2003,IFN2005,SPS2009,PGSFM2013}.
Stewart and Plotkin proposed the concept to evolutionary robust strategies, which is an extension of evolutionarily stable strategies to systems of finite population size and cannot be selectively replaced by any mutant strategies \cite{StePlo2013}.

Learning is another way of adaptation of human beings, and has also attracted much attention in theoretical economics \cite{KalLeh1993,FudLev1993,HarMas2000}, computer science \cite{Rou2016}, and complex systems theory \cite{KraKra1989,BisNai2000,SAF2002,MacFla2002,MasNak2011,GalFar2013}.
Many methods of learning have been proposed in game theory \cite{FugLev1998}, and compared with experimental results \cite{EreRot1998,Dal2005,DalFre2011}.
One of the most popular learning methods is reinforcement learning \cite{SutBar2018}.
In reinforcement learning, an agent gradually learns the optimal policy against a stationary environment.
Mutual reinforcement learning in game theory is a more difficult problem since the existence of multiple agents makes an environment nonstationary \cite{Rap1967,SanCri1996,HuWel2003,HKJKGC2017,BDK2019}.
Several methods have been proposed for reinforcement learning with multiple agents \cite{BBD2008}.

Recently, memory-$n$ strategies ($n$ periods memory strategies) with $n>1$ attract much attention in computational evolutionary game theory, because longer memory enables more complicated behavior \cite{LiKen2013,YBC2017,HMCN2017,MurBae2018,MurBae2020,Ued2021}.
Especially, longer memory enables us to design robust strategies against implementation errors.
Since agents in evolutionary biology are organisms, which are far from rational, it has been traditionally assumed that the length of memory of such agents is assumed to be short.
This is in contrast to chronology of game theory in economics, where behaviors of rational and forward-looking agents were first studied and then memory length becomes shorter in order to describe agents with bounded rationality.
Because rationality of realistic agents is bounded, shorter-memory strategies will be preferred if complexity is also considered.

Here, we investigate mutual reinforcement learning in the repeated prisoners' dilemma game \cite{RCO1965}.
More explicitly, we investigate properties of equilibria formed by learning agents when the two agents alternately learn their optimal strategies against the opponent.
In the previous study \cite{UsuUed2021}, it was found that, among all deterministic memory-one strategies, only the Grim trigger strategy, the Win-Stay Lose-Shift strategy, and the All-$D$ strategy can form symmetric equilibrium of mutual reinforcement learning.
A natural question is ``How does the set of such equilibria grow as the length of memory increases?''.
Such direction of research can be useful when we construct strong strategies based on memory-one strategies, as in computational evolutionary game theory.
Furthermore, we want to understand mutual reinforcement learning equilibria in terms of strategies, not equilibrium payoffs.
However, even whether the above equilibria formed by memory-one strategies are still equilibria in memory-$n$ settings or not has not been known.

In this paper, we extend the analysis of Ref. \cite{UsuUed2021} to memory-two strategies.
First, we provide a necessary condition for memory-two deterministic strategies to form symmetric equilibria.
Then we provide three non-trivial examples of memory-two deterministic strategies which form symmetric mutual reinforcement learning equilibria.
Furthermore, we also prove that mutual reinforcement learning equilibria formed by memory-$n^\prime$ strategies are also mutual reinforcement learning equilibria when both players use reinforcement learning of memory-$n$ strategies with $n>n^\prime$.

This paper is organized as follows.
In Section \ref{sec:model}, we introduce the repeated prisoners' dilemma game with memory-$n$ strategies, and players using reinforcement learning.
In Section \ref{sec:structure}, we show that the structure of the optimal strategies is constrained by the Bellman optimality equation.
In Section \ref{sec:symmetric}, we introduce the concepts of mutual reinforcement learning equilibrium and symmetric equilibrium.
We then provide a necessary condition for memory-two deterministic strategies to form symmetric equilibria.
In Section \ref{sec:example}, we provide three examples of memory-two deterministic strategies which form symmetric mutual reinforcement learning equilibria.
In Section \ref{sec:optimality}, we show that mutual reinforcement learning equilibria formed by memory-$n^\prime$ strategies are also mutual reinforcement learning equilibria when both players use reinforcement learning of memory-$n$ strategies with $n>n^\prime$.
Section \ref{sec:conclusion} is devoted to conclusion.

\section{Model}
\label{sec:model}
We introduce the repeated prisoners' dilemma game \cite{Rap1967}.
There are two players ($1$ and $2$) in the game.
Each player chooses cooperation ($C$) or defection ($D$) on every round.
The action of player $a$ is written as $\sigma_a \in \{ C, D \}$.
We collectively write $\bm{\sigma}:=\left( \sigma_1, \sigma_2 \right)$, and call $\bm{\sigma}$ an action profile.
We also write the space of all possible action profiles as $\Omega:=\{ C, D \}^2$.
The payoff of player $a\in \{ 1, 2 \}$ when the action profile is $\bm{\sigma}$ is described as $r_a\left( \bm{\sigma} \right)$.
The payoffs in the prisoners' dilemma game are given by
\begin{eqnarray}
 \left( r_1 \left( C, C \right), r_1 \left( C, D \right), r_1 \left( D, C \right), r_1 \left( D, D \right) \right) &=& (R, S, T, P) \\
 \left( r_2 \left( C, C \right), r_2 \left( C, D \right), r_2 \left( D, C \right), r_2 \left( D, D \right) \right) &=& (R, T, S, P)
\end{eqnarray}
with $T>R>P>S$ and $2R>T+S$.
The (time-independent) memory-$n$ strategy $(n\geq 1)$ of player $a$ is described as the conditional probability $T_a\left( \left. \sigma_a \right| \left[ \bm{\sigma}^{(-m)} \right]_{m=1}^{n} \right)$ of taking action $\sigma_a$ when the action profiles in the previous $n$ rounds are $\left[ \bm{\sigma}^{(-m)} \right]_{m=1}^{n}$, together with an initial condition, where we have introduced the notation $\left[ \bm{\sigma}^{(-m)} \right]_{m=1}^{n}:=\left( \bm{\sigma}^{(-1)}, \cdots, \bm{\sigma}^{(-n)} \right)$ from newest to oldest \cite{Ued2021}.
(As a strategy of bounded rational players, we use finite-memory strategies, not finite automata, because the former allows strategies to be stochastic.
Although stochastic strategies are allowed in our framework, we investigate only deterministic strategies in this paper.)
We write the length of memory of player $a$ as $n_a$ and define $n:= \max{\{n_1, n_2 \}}$.
In this paper, we assume that $n$ is finite.
\begin{assumption}
\label{ass:memory}
Both players use time-independent finite-memory strategies.
\end{assumption}
Below we introduce the notation $-a := \{ 1, 2 \} \backslash a$.

We consider the situation that both players learn their optimal strategies against the strategy of the opponent by reinforcement learning \cite{SutBar2018}.
In reinforcement learning, each player learns mapping (called policy) from the action profiles $\left[ \bm{\sigma}^{(-m)} \right]_{m=1}^{n}$ in the previous $n$ rounds to his/her action $\sigma$ so as to maximize his/her expected future reward.
We write the action of player $a$ at round $t$ as $\sigma_a(t)$.
In addition, we write $r_a(t):= r_a\left( \bm{\sigma}(t) \right)$.
We define the action-value function of player $a$ as
\begin{eqnarray}
 Q_a\left( \sigma_a, \left[ \bm{\sigma}^{(-m)} \right]_{m=1}^{n} \right) &:=& \mathbb{E} \left[ \left. \sum_{k=0}^\infty \gamma^k r_a(t+k) \right| \sigma_a(t)= \sigma_a, \left[ \bm{\sigma}(s) \right]_{s=t-1}^{t-n}=\left[ \bm{\sigma}^{(-m)} \right]_{m=1}^{n} \right], \nonumber \\
 &&
 \label{eq:def_Q}
\end{eqnarray}
where $\gamma$ is a discounting factor satisfying $0\leq \gamma < 1$.
The action-value function $Q_a\left( \sigma_a, \left[ \bm{\sigma}^{(-m)} \right]_{m=1}^{n} \right)$ represents the expected future payoffs $\sum_{k=0}^\infty \gamma^k r_a(t+k)$ of player $a$ after round $t$ by taking action $\sigma_a$ when action profiles in the previous $n$ rounds are $\left[ \bm{\sigma}^{(-m)} \right]_{m=1}^{n}$.
Therefore, the action-value function suggests the best action in each action profile.
It should be noted that the right-hand side does not depend on $t$.
Due to the property of memory-$n$ strategies, the action-value function $Q_a$ obeys the Bellman equation against a fixed strategy $T_{-a}$ of the opponent:
\begin{eqnarray}
 && Q_a\left( \sigma_a, \left[ \bm{\sigma}^{(-m)} \right]_{m=1}^{n} \right) \nonumber \\
 &=& \sum_{\sigma_{-a}} r_a \left( \bm{\sigma} \right) T_{-a} \left( \left. \sigma_{-a} \right| \left[ \bm{\sigma}^{(-m)} \right]_{m=1}^{n} \right) \nonumber \\
 && + \gamma \sum_{\sigma_{a}^{\prime}} \sum_{\sigma_{-a}} T_{a} \left( \left. \sigma_{a}^{\prime} \right| \bm{\sigma}, \left[ \bm{\sigma}^{(-m)} \right]_{m=1}^{n-1} \right) T_{-a} \left( \left. \sigma_{-a} \right| \left[ \bm{\sigma}^{(-m)} \right]_{m=1}^{n} \right) Q_a \left( \sigma_{a}^{\prime}, \bm{\sigma}, \left[ \bm{\sigma}^{(-m)} \right]_{m=1}^{n-1} \right). \nonumber \\
 &&
 \label{eq:BE}
\end{eqnarray}
See \ref{app:BE} for the derivation of Eq. (\ref{eq:BE}).
It has been known that the optimal policy $T^*_a$ and the optimal action-value function $Q^*_a$ obeys the following Bellman optimality equation:
\begin{eqnarray}
 && Q^*_a\left( \sigma_a, \left[ \bm{\sigma}^{(-m)} \right]_{m=1}^{n} \right) \nonumber \\
 &=& \sum_{\sigma_{-a}} r_a \left( \bm{\sigma} \right) T_{-a} \left( \left. \sigma_{-a} \right| \left[ \bm{\sigma}^{(-m)} \right]_{m=1}^{n} \right) \nonumber \\
 && + \gamma \sum_{\sigma_{-a}} T_{-a} \left( \left. \sigma_{-a} \right| \left[ \bm{\sigma}^{(-m)} \right]_{m=1}^{n} \right) \max_{\hat{\sigma}} Q^*_a \left( \hat{\sigma}, \bm{\sigma}, \left[ \bm{\sigma}^{(-m)} \right]_{m=1}^{n-1} \right),
 \label{eq:bellman}
\end{eqnarray}
with the support
\begin{eqnarray}
 \mathrm{supp}T^*_a\left( \left. \cdot \right| \left[ \bm{\sigma}^{(-m)} \right]_{m=1}^{n} \right) &=& \arg \max_{\sigma} Q^*_a \left( \sigma, \left[ \bm{\sigma}^{(-m)} \right]_{m=1}^{n} \right).
 \label{eq:BOE_support}
\end{eqnarray}
See \ref{app:BOE} for the derivation of Eqs. (\ref{eq:bellman}) and (\ref{eq:BOE_support}).
In other words, in the optimal policy against $T_{-a}$, player $a$ takes the action $\sigma_a$ which maximizes the value of $Q^*_a\left( \cdot, \left[ \bm{\sigma}^{(-m)} \right]_{m=1}^{n} \right)$ when the action profiles at the previous $n$ rounds are $\left[ \bm{\sigma}^{(-m)} \right]_{m=1}^{n}$.
In Q-learning, which is one of the simplest algorithms of reinforcement learning, it is known that values of action-value functions converge to the solutions of the Bellman optimality equation if all state-action pairs are visited an infinite number of times \cite{SutBar2018}.

We investigate the situation that players infinitely repeat the infinitely-repeated games and players alternately learn their optimal strategies in each game, as in Ref. \cite{UsuUed2021}.
We write the optimal strategy and the corresponding optimal action-value function of player $a$ at $d$-th game as $T^{*(d)}_a$ and $Q^{*(d)}_a$, respectively.
Given an initial strategy $T^{*(0)}_2$ of player $2$, in the $(2l-1)$-th game $(l\in \mathbb{N})$, player $1$ learns $T^{*(2l-1)}_1$ against $T^{*(2l-2)}_2$ by calculating $Q^{*(2l-1)}_1$.
In the $2l$-th game, player $2$ learns $T^{*(2l)}_2$ against $T^{*(2l-1)}_1$ by calculating $Q^{*(2l)}_2$.
We are interested in the fixed points of the dynamics, that is, $T^{*(\infty)}_a$ and $Q^{*(\infty)}_a$.

In this paper, we mainly investigate situations that the support (\ref{eq:BOE_support}) contains only one action, that is, strategies are deterministic.
The number of deterministic memory-$n$ strategies in the repeated prisoners' dilemma game is $2^{2^{2n}}$, which increases rapidly as $n$ increases.

\section{Structure of optimal strategies}
\label{sec:structure}
Below we consider only the case $n=2$.
The Bellman optimality equation (\ref{eq:bellman}) for $n=2$ is 
\begin{eqnarray}
 && Q^*_a\left( \sigma_a, \bm{\sigma}^{(-1)}, \bm{\sigma}^{(-2)} \right) \nonumber \\
 &=& \sum_{\sigma_{-a}} r_a \left( \bm{\sigma} \right) T_{-a} \left( \left. \sigma_{-a} \right| \bm{\sigma}^{(-1)}, \bm{\sigma}^{(-2)} \right) \nonumber \\
 && + \gamma \sum_{\sigma_{-a}} T_{-a} \left( \left. \sigma_{-a} \right| \bm{\sigma}^{(-1)}, \bm{\sigma}^{(-2)}\right) \max_{\hat{\sigma}} Q^*_a \left( \hat{\sigma}, \bm{\sigma}, \bm{\sigma}^{(-1)} \right)
 \label{eq:BOE_mt}
\end{eqnarray}
with
\begin{eqnarray}
 \mathrm{supp}T^*_a\left( \left. \cdot \right| \bm{\sigma}^{(-1)}, \bm{\sigma}^{(-2)} \right) &=& \arg \max_{\sigma} Q^*_a \left( \sigma, \bm{\sigma}^{(-1)}, \bm{\sigma}^{(-2)} \right).
 \label{eq:BOE_support_mt}
\end{eqnarray}
The number of memory-two deterministic strategies is $2^{16}$, which is quite large, and therefore we cannot investigate all memory-two deterministic strategies as in the case of memory-one deterministic strategies \cite{UsuUed2021}.
Instead, we first investigate general properties of optimal strategies.

We introduce the matrix representation of a strategy:
\begin{eqnarray}
 && T_a\left( \sigma \right) \nonumber \\
 &:=& \left(
\begin{array}{cccc}
T_a\left( \left. \sigma \right| (C,C), (C,C)\right) & T_a\left( \left. \sigma \right| (C,C), (C,D)\right) & T_a\left( \left. \sigma \right| (C,C), (D,C)\right) & T_a\left( \left. \sigma \right| (C,C), (D,D)\right) \\
T_a\left( \left. \sigma \right| (C,D), (C,C)\right) & T_a\left( \left. \sigma \right| (C,D), (C,D)\right) & T_a\left( \left. \sigma \right| (C,D), (D,C)\right) & T_a\left( \left. \sigma \right| (C,D), (D,D)\right) \\
T_a\left( \left. \sigma \right| (D,C), (C,C)\right) & T_a\left( \left. \sigma \right| (D,C), (C,D)\right) & T_a\left( \left. \sigma \right| (D,C), (D,C)\right) & T_a\left( \left. \sigma \right| (D,C), (D,D)\right) \\
T_a\left( \left. \sigma \right| (D,D), (C,C)\right) & T_a\left( \left. \sigma \right| (D,D), (C,D)\right) &T_a\left( \left. \sigma \right| (D,D), (D,C)\right) & T_a\left( \left. \sigma \right| (D,D), (D,D)\right)
\end{array}
\right). \nonumber \\
&&
\label{eq:stmatrix}
\end{eqnarray}
For deterministic strategies, each component in the matrix is $0$ or $1$.

We now prove the following proposition:
\begin{proposition}
\label{prop:structure}
For two different action profiles $\bm{\sigma}^{(-2)}$ and $\bm{\sigma}^{(-2)\prime}$, if
\begin{eqnarray}
 T_{-a} \left( \left. \sigma \right| \bm{\sigma}^{(-1)}, \bm{\sigma}^{(-2)} \right) &=& T_{-a} \left( \left. \sigma \right| \bm{\sigma}^{(-1)}, \bm{\sigma}^{(-2)\prime} \right) \quad \left( \forall \sigma \right)
\end{eqnarray}
holds for some $\bm{\sigma}^{(-1)}$, then 
\begin{eqnarray}
 T^*_a \left( \left. \sigma \right| \bm{\sigma}^{(-1)}, \bm{\sigma}^{(-2)} \right) &=& T^*_a \left( \left. \sigma \right| \bm{\sigma}^{(-1)}, \bm{\sigma}^{(-2)\prime} \right) \quad \left( \forall \sigma \right)
 \label{eq:structure}
\end{eqnarray}
also holds.
\end{proposition}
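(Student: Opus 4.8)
The plan is to read the conclusion off directly from the Bellman optimality equation (\ref{eq:BOE_mt}) together with the support condition (\ref{eq:BOE_support_mt}), with no fixed-point or contraction argument needed. The first step is to show that the hypothesis forces the optimal action-value functions to agree at the two histories,
\begin{eqnarray}
 Q^*_a\left( \sigma_a, \bm{\sigma}^{(-1)}, \bm{\sigma}^{(-2)} \right) &=& Q^*_a\left( \sigma_a, \bm{\sigma}^{(-1)}, \bm{\sigma}^{(-2)\prime} \right) \qquad \left( \forall \sigma_a \right),
 \label{eq:Qequal}
\end{eqnarray}
and the second step is to deduce (\ref{eq:structure}) from (\ref{eq:BOE_support_mt}).

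For the first step, I would fix $\sigma_a$ and inspect the right-hand side of (\ref{eq:BOE_mt}) evaluated at $\left( \sigma_a, \bm{\sigma}^{(-1)}, \bm{\sigma}^{(-2)} \right)$. The running state there is $\bm{\sigma} = \left( \sigma_a, \sigma_{-a} \right)$, so both $r_a\left( \bm{\sigma} \right)$ and $\max_{\hat{\sigma}} Q^*_a\left( \hat{\sigma}, \bm{\sigma}, \bm{\sigma}^{(-1)} \right)$ depend only on $\sigma_a$, $\sigma_{-a}$ and $\bm{\sigma}^{(-1)}$; advancing one round pushes $\bm{\sigma}^{(-2)}$ out of the length-two memory window, so it never reappears as an argument of $Q^*_a$ on the right. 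The only place $\bm{\sigma}^{(-2)}$ enters is through the factors $T_{-a}\left( \left. \sigma_{-a} \right| \bm{\sigma}^{(-1)}, \bm{\sigma}^{(-2)} \right)$. Hence the right-hand side is a fixed function of $\sigma_a$, $\bm{\sigma}^{(-1)}$ and the vector $\left( T_{-a}\left( \left. \sigma_{-a} \right| \bm{\sigma}^{(-1)}, \bm{\sigma}^{(-2)} \right) \right)_{\sigma_{-a}}$, and the hypothesis says exactly that this vector is unchanged when $\bm{\sigma}^{(-2)}$ is replaced by $\bm{\sigma}^{(-2)\prime}$; this gives (\ref{eq:Qequal}). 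Then $\arg \max_{\sigma} Q^*_a\left( \sigma, \bm{\sigma}^{(-1)}, \bm{\sigma}^{(-2)} \right) = \arg \max_{\sigma} Q^*_a\left( \sigma, \bm{\sigma}^{(-1)}, \bm{\sigma}^{(-2)\prime} \right)$, so by (\ref{eq:BOE_support_mt}) the supports of $T^*_a\left( \left. \cdot \right| \bm{\sigma}^{(-1)}, \bm{\sigma}^{(-2)} \right)$ and $T^*_a\left( \left. \cdot \right| \bm{\sigma}^{(-1)}, \bm{\sigma}^{(-2)\prime} \right)$ coincide; for deterministic strategies the support is a single action, so equal supports is precisely (\ref{eq:structure}).

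I do not expect a genuine obstacle here; the one point worth stating carefully is that there is no circularity in using (\ref{eq:BOE_mt}), since the copies of $Q^*_a$ on its right-hand side are evaluated only at histories whose oldest entry is $\bm{\sigma}^{(-1)}$, not $\bm{\sigma}^{(-2)}$, so the whole argument is a single structural observation about the shift of the memory window rather than an induction on the game index $d$ or on memory depth. (If one allowed genuinely stochastic optimal strategies with arbitrary tie-breaking, (\ref{eq:BOE_support_mt}) would still give equality of the supports, which is all that is being claimed in the deterministic setting the paper works in.)
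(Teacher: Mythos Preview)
Your proposal is correct and follows exactly the paper's own argument: substitute the hypothesis into the right-hand side of (\ref{eq:BOE_mt}), observe that $\bm{\sigma}^{(-2)}$ enters only through the $T_{-a}$ factors and hence $Q^*_a$ agrees at the two histories, then invoke (\ref{eq:BOE_support_mt}). Your added remarks on the absence of circularity and on the stochastic tie-breaking case are accurate clarifications but not needed for the paper's deterministic setting.
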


\begin{proof}
For such $\bm{\sigma}^{(-1)}$, because of Eq. (\ref{eq:BOE_mt}), we find
\begin{eqnarray}
 && Q^*_a\left( \sigma_a, \bm{\sigma}^{(-1)}, \bm{\sigma}^{(-2)} \right) \nonumber \\
 &=& \sum_{\sigma_{-a}} r_a \left( \bm{\sigma} \right) T_{-a} \left( \left. \sigma_{-a} \right| \bm{\sigma}^{(-1)}, \bm{\sigma}^{(-2)} \right) \nonumber \\
 && + \gamma \sum_{\sigma_{-a}} T_{-a} \left( \left. \sigma_{-a} \right| \bm{\sigma}^{(-1)}, \bm{\sigma}^{(-2)}\right) \max_{\hat{\sigma}} Q^*_a \left( \hat{\sigma}, \bm{\sigma}, \bm{\sigma}^{(-1)} \right) \nonumber \\
 &=& \sum_{\sigma_{-a}} r_a \left( \bm{\sigma} \right) T_{-a} \left( \left. \sigma_{-a} \right| \bm{\sigma}^{(-1)}, \bm{\sigma}^{(-2)\prime} \right) \nonumber \\
 && + \gamma \sum_{\sigma_{-a}} T_{-a} \left( \left. \sigma_{-a} \right| \bm{\sigma}^{(-1)}, \bm{\sigma}^{(-2)\prime}\right) \max_{\hat{\sigma}} Q^*_a \left( \hat{\sigma}, \bm{\sigma}, \bm{\sigma}^{(-1)} \right) \nonumber \\
 &=& Q^*_a\left( \sigma_a, \bm{\sigma}^{(-1)}, \bm{\sigma}^{(-2)\prime} \right)
\end{eqnarray}
for all $\sigma_a$.
Since $T^*_a$ is determined by Eq. (\ref{eq:BOE_support_mt}), we obtain Eq. (\ref{eq:structure}).
%\qed
\end{proof}

This proposition implies that the structure of the matrix $T^*_a(\sigma)$ is the same as that of $T_{-a}(\sigma)$.
For deterministic strategies, in order to see this in more detail, we introduce the following sets for $a\in \{ 1, 2 \}$ and $\bm{\sigma}^{(-1)} \in \Omega$:
\begin{eqnarray}
 N_x^{(a)} \left( \bm{\sigma}^{(-1)} \right) &:=& \left\{ \left. \bm{\sigma}^{(-2)} \in \Omega \right| T_a \left( \left. C \right| \bm{\sigma}^{(-1)}, \bm{\sigma}^{(-2)} \right) = x \right\},
\end{eqnarray}
where $x\in \{ 0, 1 \}$.
That is, $N_1^{(a)} \left( \bm{\sigma}^{(-1)} \right)$ describes the set of $\bm{\sigma}^{(-2)}$ such that player $a$ using strategy $T_a$ cooperates after the history $\left[ \bm{\sigma}^{(-m)} \right]_{m=1}^{2}$.
Similarly, $N_0^{(a)} \left( \bm{\sigma}^{(-1)} \right)$ describes the set of $\bm{\sigma}^{(-2)}$ such that player $a$ using strategy $T_a$  defects after the history $\left[ \bm{\sigma}^{(-m)} \right]_{m=1}^{2}$.
We remark that $N_0^{(a)} \left( \bm{\sigma}^{(-1)} \right) \cup N_1^{(a)} \left( \bm{\sigma}^{(-1)} \right) = \Omega$ for all $a$ and $\bm{\sigma}^{(-1)}$.
Then, Proposition \ref{prop:structure} leads the following corollary:
\begin{corollary}
\label{cor:structure}
For a deterministic strategy $T_{-a}$ of player $-a$, if the optimal strategy $T^*_a$ of player $a$ against $T_{-a}$ is also deterministic, then one of the following four relations holds for each $\bm{\sigma}^{(-1)}\in \Omega$:
\begin{enumerate}[(a)]
 \item $N_x^{(a)} \left( \bm{\sigma}^{(-1)} \right) = N_x^{(-a)} \left( \bm{\sigma}^{(-1)} \right)$ for all $x$
 \item $N_x^{(a)} \left( \bm{\sigma}^{(-1)} \right) = N_{1-x}^{(-a)} \left( \bm{\sigma}^{(-1)} \right)$ for all $x$
 \item $N_0^{(a)} \left( \bm{\sigma}^{(-1)} \right) = N_0^{(-a)} \left( \bm{\sigma}^{(-1)} \right) \cup N_1^{(-a)} \left( \bm{\sigma}^{(-1)} \right) = \Omega$ and $N_1^{(a)} \left( \bm{\sigma}^{(-1)} \right) = \emptyset$
 \item $N_1^{(a)} \left( \bm{\sigma}^{(-1)} \right) = N_0^{(-a)} \left( \bm{\sigma}^{(-1)} \right) \cup N_1^{(-a)} \left( \bm{\sigma}^{(-1)} \right) = \Omega$ and $N_0^{(a)} \left( \bm{\sigma}^{(-1)} \right) = \emptyset$.
\end{enumerate}
\end{corollary}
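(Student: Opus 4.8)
The plan is to obtain Corollary~\ref{cor:structure} directly from Proposition~\ref{prop:structure} by rephrasing the latter as a statement about partitions of the four-element set $\Omega$. Fix $\bm{\sigma}^{(-1)}\in\Omega$, and recall that in the setting of the corollary the sets $N_x^{(a)}(\bm{\sigma}^{(-1)})$ and $N_x^{(-a)}(\bm{\sigma}^{(-1)})$ are computed from $T^*_a$ and $T_{-a}$, respectively. Since $T_{-a}$ is deterministic, two states $\bm{\sigma}^{(-2)},\bm{\sigma}^{(-2)\prime}\in\Omega$ satisfy $T_{-a}\!\left(\left.\cdot\right|\bm{\sigma}^{(-1)},\bm{\sigma}^{(-2)}\right)=T_{-a}\!\left(\left.\cdot\right|\bm{\sigma}^{(-1)},\bm{\sigma}^{(-2)\prime}\right)$ if and only if they lie in the same member of the pair $\left\{N_0^{(-a)}(\bm{\sigma}^{(-1)}),N_1^{(-a)}(\bm{\sigma}^{(-1)})\right\}$, which partitions $\Omega$ (one member possibly empty). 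By Proposition~\ref{prop:structure}, states lying in a common $N_x^{(-a)}(\bm{\sigma}^{(-1)})$ then yield the same value of $T^*_a\!\left(\left.\cdot\right|\bm{\sigma}^{(-1)},\cdot\right)$, hence lie in a common $N_y^{(a)}(\bm{\sigma}^{(-1)})$. Thus each of $N_0^{(-a)}(\bm{\sigma}^{(-1)})$ and $N_1^{(-a)}(\bm{\sigma}^{(-1)})$ is contained in a single member of $\left\{N_0^{(a)}(\bm{\sigma}^{(-1)}),N_1^{(a)}(\bm{\sigma}^{(-1)})\right\}$; equivalently, the partition induced by $T^*_a$ coarsens the one induced by $T_{-a}$.

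Next I would enumerate the four possibilities for which member of the $T^*_a$-partition contains each of $N_0^{(-a)}$ and $N_1^{(-a)}$ (suppressing the argument $\bm{\sigma}^{(-1)}$). If $N_0^{(-a)}\subseteq N_0^{(a)}$ and $N_1^{(-a)}\subseteq N_1^{(a)}$, then using $N_0^{(-a)}\cup N_1^{(-a)}=\Omega$ together with $N_0^{(a)}\cap N_1^{(a)}=\emptyset$ forces $N_x^{(a)}=N_x^{(-a)}$ for $x\in\{0,1\}$, giving case~(a); the swapped inclusions $N_0^{(-a)}\subseteq N_1^{(a)}$, $N_1^{(-a)}\subseteq N_0^{(a)}$ give case~(b) in the same way. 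If instead $N_0^{(-a)}$ and $N_1^{(-a)}$ are contained in one and the same member, say $N_0^{(a)}\supseteq N_0^{(-a)}\cup N_1^{(-a)}=\Omega$, then $N_0^{(a)}=\Omega$ and $N_1^{(a)}=\emptyset$, which is case~(c); the remaining choice yields case~(d). These four alternatives are exhaustive, and since the claim only asserts that \emph{one} of them holds, no disjointness between the cases is needed.

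Because this is a short combinatorial consequence of the already-established Proposition~\ref{prop:structure}, I do not expect a genuine obstacle. The only point demanding a little care is the degenerate situation in which $T_{-a}$ (or $T^*_a$) does not depend on $\bm{\sigma}^{(-2)}$ for the fixed $\bm{\sigma}^{(-1)}$, so that one of the $N$-sets is empty: there the relevant containment is vacuous, the containing member is chosen arbitrarily, and several of the cases (a)--(d) hold at once, which is consistent with the ``one of the following four relations holds'' phrasing of the corollary.
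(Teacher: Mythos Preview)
Your argument is correct and matches the paper's approach: the paper states Corollary~\ref{cor:structure} as an immediate consequence of Proposition~\ref{prop:structure} without spelling out the proof, and your partition-coarsening formulation together with the four-case enumeration is exactly the combinatorics implicit in that claim. The handling of the degenerate (empty $N_x$) situation is also appropriate.
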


\section{Symmetric equilibrium}
\label{sec:symmetric}
In this section, we investigate symmetric equilibrium of mutual reinforcement learning.

First, we introduce the notation $\overline{C}:=D$, $\overline{D}:=C$, and $\pi\left( \sigma_1, \sigma_2 \right) := \left( \sigma_2, \sigma_1 \right)$.
We define the word \emph{same} strategy.
\begin{definition}
\label{def:same}
A strategy $T_a$ of player $a$ is the \emph{same} strategy as that of player $-a$ iff
\begin{eqnarray}
 T_a \left( \left. \sigma \right| \bm{\sigma}^{(-1)}, \bm{\sigma}^{(-2)} \right) &=& T_{-a} \left( \left. \sigma \right| \pi\left( \bm{\sigma}^{(-1)} \right), \pi\left( \bm{\sigma}^{(-2)} \right) \right)
 \label{eq:same}
\end{eqnarray}
for all $\sigma$, $\bm{\sigma}^{(-1)}$ and, $\bm{\sigma}^{(-2)}$.
\end{definition}

Next, we introduce equilibria achieved by mutual reinforcement learning.
\begin{definition}
\label{def:RLeq}
A pair of strategy $T_1$ and $T_2$ is a \emph{mutual reinforcement learning equilibrium} iff $T_a$ is the optimal strategy against $T_{-a}$ for $a=1,2$.
\end{definition}
We emphasize that such equilibria are defined only for a time-independent part of finite-memory strategies $T_a$, although finite-memory strategies of players are generally defined as a pair of a time-independent part $T_a$ and an initial condition.
This definition is in contrast to that of Nash equilibrium or subgame perfect equilibrium.
When some appropriate initial condition is chosen, it becomes a subgame perfect equilibrium of all time-independent finite-memory strategies.
In addition, because the optimal policy is determined by comparing the action-value functions, which are functions of finite-length histories including off-equilibrium path, mutual reinforcement learning equilibrium is quite different from Nash equilibrium.

We also remark that a mutual reinforcement learning equilibrium can be achieved by Q-learning if all state-action pairs are visited an infinite number of times as mentioned above, and if an initial strategy of player 2 is appropriate.
Even if not all state-action pairs are visited an infinite number of times, we can obtain the mutual reinforcement learning equilibrium by introducing infinitesimal error probability to the opponent's strategy as in Ref. \cite{UsuUed2021}.

For deterministic mutual reinforcement learning equilibria, the following proposition is the direct consequence of Corollary \ref{cor:structure}.
\begin{proposition}
\label{prop:eq}
For mutual reinforcement learning equilibria formed by deterministic strategies, one of the following two relations holds for each $\bm{\sigma}^{(-1)} \in \Omega$:
\begin{enumerate}[(a)]
 \item $N_x^{(1)} \left( \bm{\sigma}^{(-1)} \right) = N_x^{(2)} \left( \bm{\sigma}^{(-1)} \right)$ for all $x$
 \item $N_x^{(1)} \left( \bm{\sigma}^{(-1)} \right) = N_{1-x}^{(2)} \left( \bm{\sigma}^{(-1)} \right)$ for all $x$.
\end{enumerate}
\end{proposition}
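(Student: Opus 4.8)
The plan is to derive Proposition~\ref{prop:eq} by invoking Corollary~\ref{cor:structure} twice — once for each player — and then merging the degenerate alternatives. In a mutual reinforcement learning equilibrium formed by deterministic strategies, $T_1$ is, by Definition~\ref{def:symeq}, the optimal strategy against the deterministic strategy $T_2$ and is itself deterministic, and symmetrically $T_2$ is the deterministic optimal strategy against $T_1$. Hence the hypotheses of Corollary~\ref{cor:structure} are satisfied both with $a=1$ (reading $T_{-a}=T_2$, $T^*_a=T_1$) and with $a=2$ (reading $T_{-a}=T_1$, $T^*_a=T_2$).

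First I would fix $\bm{\sigma}^{(-1)}\in\Omega$ and apply Corollary~\ref{cor:structure} with $a=1$. If its alternative (a) or (b) holds for this $\bm{\sigma}^{(-1)}$, it is already alternative (a) or (b) of the proposition and there is nothing more to do. Otherwise alternative (c) or (d) holds, which forces $N_1^{(1)}(\bm{\sigma}^{(-1)})\in\{\emptyset,\Omega\}$ with $N_0^{(1)}(\bm{\sigma}^{(-1)})$ its complement in $\Omega$. I would then apply Corollary~\ref{cor:structure} with $a=2$ to the same $\bm{\sigma}^{(-1)}$. If its alternative (a) or (b) holds we are again done (these read, after relabelling $x$, as alternatives (a) and (b) of the proposition); otherwise $N_1^{(2)}(\bm{\sigma}^{(-1)})\in\{\emptyset,\Omega\}$ as well, with $N_0^{(2)}(\bm{\sigma}^{(-1)})$ its complement.

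In this remaining situation both $N_1^{(1)}(\bm{\sigma}^{(-1)})$ and $N_1^{(2)}(\bm{\sigma}^{(-1)})$ lie in $\{\emptyset,\Omega\}$, so a direct inspection of the four possibilities shows that either $N_1^{(1)}(\bm{\sigma}^{(-1)})=N_1^{(2)}(\bm{\sigma}^{(-1)})$ (and then $N_0^{(1)}(\bm{\sigma}^{(-1)})=N_0^{(2)}(\bm{\sigma}^{(-1)})$ too), which is alternative (a), or $N_1^{(1)}(\bm{\sigma}^{(-1)})=N_0^{(2)}(\bm{\sigma}^{(-1)})$ (and then $N_0^{(1)}(\bm{\sigma}^{(-1)})=N_1^{(2)}(\bm{\sigma}^{(-1)})$), which is alternative (b). Since $\bm{\sigma}^{(-1)}$ was arbitrary, the proposition follows.

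The argument is essentially bookkeeping, so I do not anticipate a genuine obstacle; the only point that needs care is not to overlook the degenerate alternatives (c) and (d) of Corollary~\ref{cor:structure}, which on their own do not relate the rows of $T_1$ and $T_2$. It is exactly the second application of the corollary, with the players exchanged, that forces the corresponding row of the other strategy to be constant as well, after which every way of pairing two constant rows collapses onto alternative (a) or (b).
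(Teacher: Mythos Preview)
Your proposal is correct and follows essentially the same approach as the paper: apply Corollary~\ref{cor:structure} with $a=1$, then use the fact that $T_2$ is also optimal against $T_1$ (i.e., the corollary with $a=2$) to absorb the degenerate alternatives (c) and (d) into (a) or (b). The paper compresses this into a single sentence (``the cases (c) and (d) are excluded or integrated into the case (a) or (b)''), whereas you spell out the four-case inspection when both rows are constant; this added explicitness is harmless and arguably clearer.
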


\begin{proof}
According to Corollary \ref{cor:structure}, one of the four situations (a)-(d) holds for the optimal strategy $T_1$ against $T_2$.
However, because $T_2$ is also the optimal strategy against $T_1$, the cases (c) and (d) are excluded or integrated into the case (a) or (b).
%\qed
\end{proof}

Furthermore, we introduce symmetric equilibria of mutual reinforcement learning.
\begin{definition}
\label{def:symeq}
A pair of strategy $T_1$ and $T_2$ is a \emph{symmetric mutual reinforcement learning equilibrium} iff $T_a$ is the optimal strategy against $T_{-a}$ and $T_a$ is the same strategy as $T_{-a}$ for $a=1,2$.
\end{definition}

It should be noted that the deterministic optimal strategies can be written as
\begin{eqnarray}
 T^*_a\left( \left. \sigma \right| \bm{\sigma}^{(-1)}, \bm{\sigma}^{(-2)} \right) &=& \mathbb{I} \left( Q^*_a \left( \sigma, \bm{\sigma}^{(-1)}, \bm{\sigma}^{(-2)} \right) > Q^*_a \left( \overline{\sigma}, \bm{\sigma}^{(-1)}, \bm{\sigma}^{(-2)} \right) \right), \nonumber \\
 && 
\end{eqnarray}
where $\mathbb{I}(\cdots)$ is the indicator function that returns $1$ when $\cdots$ holds and $0$ otherwise.
We also introduce the following sets for $a\in \{ 1, 2 \}$ and $\bm{\sigma}^{(-1)} \in \Omega$:
\begin{eqnarray}
 \tilde{N}_x^{(a)} \left( \bm{\sigma}^{(-1)} \right) &:=& \left\{ \left. \bm{\sigma}^{(-2)} \in \Omega \right| T_a \left( \left. C \right| \bm{\sigma}^{(-1)}, \pi\left(\bm{\sigma}^{(-2)} \right) \right) = x \right\},
\end{eqnarray}
where $x\in \{ 0, 1 \}$.
We now prove the first main result of this paper.

\begin{theorem}
\label{th:symeq}
For symmetric mutual reinforcement learning equilibria formed by deterministic strategies, the following relations must hold:
\begin{enumerate}[(a)]
 \item For $\bm{\sigma}^{(-1)} \in \left\{ (C,C), (D,D) \right\}$,
 \begin{eqnarray}
 T_a \left( \left. C \right| \bm{\sigma}^{(-1)}, (C,D) \right) &=& T_a \left( \left. C \right| \bm{\sigma}^{(-1)}, (D,C) \right)
 \label{eq:symeq_property1}
 \end{eqnarray}
 for all $a$.
 \item For $\bm{\sigma}^{(-1)} \in \left\{ (C,D), (D,C) \right\}$, 
 \begin{eqnarray}
 N_x^{(a)} \left( \pi\left( \bm{\sigma}^{(-1)} \right) \right) &=& \tilde{N}_x^{(a)} \left( \bm{\sigma}^{(-1)} \right) \quad (\forall x)
 \label{eq:symeq_property2-1}
 \end{eqnarray}
 or
 \begin{eqnarray}
 N_x^{(a)} \left( \pi\left( \bm{\sigma}^{(-1)} \right) \right) &=& \tilde{N}_{1-x}^{(a)} \left( \bm{\sigma}^{(-1)} \right)  \quad (\forall x)
 \label{eq:symeq_property2-2}
 \end{eqnarray}
 holds.
\end{enumerate}
\end{theorem}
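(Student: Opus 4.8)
The plan is to deduce both relations from Proposition~\ref{prop:eq} together with the ``same strategy'' property of Definition~\ref{def:same}, the only active ingredient being the combinatorics of the permutation $\pi$ on $\Omega$, which fixes $(C,C)$ and $(D,D)$ and interchanges $(C,D)$ and $(D,C)$. The first step is to rewrite the ``same'' relation~(\ref{eq:same}) at the level of the index sets: for $\bm{\sigma}^{(-2)} \in \Omega$ we have $\bm{\sigma}^{(-2)} \in N_x^{(a)}(\bm{\sigma}^{(-1)})$ iff $T_a(C \mid \bm{\sigma}^{(-1)}, \bm{\sigma}^{(-2)}) = x$ iff $T_{-a}(C \mid \pi(\bm{\sigma}^{(-1)}), \pi(\bm{\sigma}^{(-2)})) = x$ iff $\pi(\bm{\sigma}^{(-2)}) \in N_x^{(-a)}(\pi(\bm{\sigma}^{(-1)}))$, so that, writing $\pi(S):=\{\pi(\bm{\rho}):\bm{\rho}\in S\}$ and using $\pi=\pi^{-1}$,
\begin{eqnarray}
N_x^{(a)}\!\left(\bm{\sigma}^{(-1)}\right) &=& \pi\!\left( N_x^{(-a)}\!\left(\pi\!\left(\bm{\sigma}^{(-1)}\right)\right)\right) \qquad (\forall x,\ a\in\{1,2\}).
\label{eq:plan-same}
\end{eqnarray}
The same unwinding of definitions gives $\tilde{N}_x^{(a)}(\bm{\sigma}^{(-1)}) = \pi\!\left( N_x^{(a)}(\bm{\sigma}^{(-1)})\right)$ directly from the definition of $\tilde{N}_x^{(a)}$.

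For part~(a), fix $\bm{\sigma}^{(-1)} \in \{(C,C),(D,D)\}$, which is a fixed point of $\pi$, so (\ref{eq:plan-same}) becomes $N_x^{(1)}(\bm{\sigma}^{(-1)}) = \pi( N_x^{(2)}(\bm{\sigma}^{(-1)}))$. Proposition~\ref{prop:eq} allows two cases. In its case~(b), $N_x^{(1)}(\bm{\sigma}^{(-1)}) = N_{1-x}^{(2)}(\bm{\sigma}^{(-1)})$, so $\pi( N_x^{(2)}(\bm{\sigma}^{(-1)})) = N_{1-x}^{(2)}(\bm{\sigma}^{(-1)})$; but the point $\bm{\sigma}^{(-1)}$ itself lies in exactly one of $N_0^{(2)}(\bm{\sigma}^{(-1)})$, $N_1^{(2)}(\bm{\sigma}^{(-1)})$ and is $\pi$-fixed, so it would have to lie in both --- a contradiction. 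Hence Proposition~\ref{prop:eq}(a) holds, whence $N_x^{(1)}(\bm{\sigma}^{(-1)}) = N_x^{(2)}(\bm{\sigma}^{(-1)}) = \pi( N_x^{(2)}(\bm{\sigma}^{(-1)}))$, i.e.\ each $N_x^{(a)}(\bm{\sigma}^{(-1)})$ is $\pi$-invariant. Since $\pi$ swaps $(C,D)$ and $(D,C)$, these two states lie in the same $N_x^{(a)}(\bm{\sigma}^{(-1)})$, which is precisely (\ref{eq:symeq_property1}) for both $a$.

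For part~(b), fix $\bm{\sigma}^{(-1)} \in \{(C,D),(D,C)\}$; by the left--right relabeling it suffices to treat $\bm{\sigma}^{(-1)} = (C,D)$, with $\pi(\bm{\sigma}^{(-1)}) = (D,C)$. For $a=2$: (\ref{eq:plan-same}) at $(C,D)$ reads $N_x^{(1)}((C,D)) = \pi( N_x^{(2)}((D,C)))$, while Proposition~\ref{prop:eq} at $(C,D)$ says $N_x^{(1)}((C,D))$ equals $N_x^{(2)}((C,D))$ or $N_{1-x}^{(2)}((C,D))$; applying $\pi$ to the resulting identity gives $N_x^{(2)}((D,C)) = \pi( N_x^{(2)}((C,D))) = \tilde{N}_x^{(2)}((C,D))$ or $N_x^{(2)}((D,C)) = \pi( N_{1-x}^{(2)}((C,D))) = \tilde{N}_{1-x}^{(2)}((C,D))$, i.e.\ (\ref{eq:symeq_property2-1}) or (\ref{eq:symeq_property2-2}). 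For $a=1$: (\ref{eq:plan-same}) at $(D,C)$ reads $N_x^{(1)}((D,C)) = \pi( N_x^{(2)}((C,D)))$, and Proposition~\ref{prop:eq} at $(C,D)$ lets me substitute $N_x^{(2)}((C,D)) = N_x^{(1)}((C,D))$ or $N_{1-x}^{(1)}((C,D))$, giving $N_x^{(1)}((D,C)) = \tilde{N}_x^{(1)}((C,D))$ or $\tilde{N}_{1-x}^{(1)}((C,D))$. The case $\bm{\sigma}^{(-1)}=(D,C)$ follows by exchanging the two coordinates throughout.

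I expect the main obstacle to be bookkeeping rather than conceptual: keeping the nested $\pi$-conjugations and the two player labels straight without errors, and recognizing that part~(a) genuinely needs the $\pi$-fixed-point argument to exclude case~(b) of Proposition~\ref{prop:eq} --- it does not fall out of a naive symmetrization of the value functions.
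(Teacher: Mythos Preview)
Your proof is correct and takes essentially the same approach as the paper: both combine Proposition~\ref{prop:eq} with the same-strategy relation~(\ref{eq:same}) and the action of $\pi$ on $\Omega$, ruling out case~(b) of Proposition~\ref{prop:eq} in part~(a) via a $\pi$-fixed element of $\Omega$. Your set-level reformulation of~(\ref{eq:same}) and the explicit treatment of both $a=1$ and $a=2$ in part~(b) are minor streamlinings; the paper only writes out the $a=1$ case there and applies Proposition~\ref{prop:eq} at $\pi(\bm{\sigma}^{(-1)})$ rather than at $\bm{\sigma}^{(-1)}$, but the substance is identical.
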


\begin{proof}
For $\bm{\sigma}^{(-1)} \in \left\{ (C,C), (D,D) \right\}$, $\pi\left( \bm{\sigma}^{(-1)}  \right) = \bm{\sigma}^{(-1)}$ holds.
Because $T_1$ and $T_2$ are the same strategies as each other, 
\begin{eqnarray}
 T_1 \left( \left. C \right| \bm{\sigma}^{(-1)}, \bm{\sigma}^{(-2)} \right) &=& T_2 \left( \left. C \right| \bm{\sigma}^{(-1)}, \bm{\sigma}^{(-2)} \right) \quad \left( \bm{\sigma}^{(-2)} \in \left\{ (C,C), (D,D) \right\} \right) \nonumber \\
 &&
\end{eqnarray}
holds.
This and Proposition \ref{prop:eq} imply that $N_x^{(1)} \left( \bm{\sigma}^{(-1)} \right)=N_x^{(2)} \left( \bm{\sigma}^{(-1)} \right)$ ($\forall x\in\{0, 1\}$) must holds.
On the other hand, due to Eq. (\ref{eq:same}),
\begin{eqnarray}
 T_1 \left( \left. C \right| \bm{\sigma}^{(-1)}, (C,D) \right) &=& T_2 \left( \left. C \right| \bm{\sigma}^{(-1)}, (D,C) \right) \\
 T_1 \left( \left. C \right| \bm{\sigma}^{(-1)}, (D,C) \right) &=& T_2 \left( \left. C \right| \bm{\sigma}^{(-1)}, (C,D) \right)
\end{eqnarray}
also hold.
This means that, if $(C,D)\in N_x^{(1)} \left( \bm{\sigma}^{(-1)} \right)$, then $(D,C)\in N_x^{(2)} \left( \pi\left( \bm{\sigma}^{(-1)} \right) \right)=N_x^{(2)} \left( \bm{\sigma}^{(-1)} \right)=N_x^{(1)} \left( \bm{\sigma}^{(-1)} \right)$, leading to Eq. (\ref{eq:symeq_property1}).

For $\bm{\sigma}^{(-1)} \in \left\{ (C,D), (D,C) \right\}$, because $T_1$ and $T_2$ are the same strategies as each other,
\begin{eqnarray}
 T_2 \left( \left. C \right| \pi\left( \bm{\sigma}^{(-1)} \right), \bm{\sigma}^{(-2)} \right) &=& T_1 \left( \left. C \right| \bm{\sigma}^{(-1)}, \pi\left( \bm{\sigma}^{(-2)} \right) \right)
\end{eqnarray}
holds for $\forall \bm{\sigma}^{(-2)} \in \Omega$.
This means that
\begin{eqnarray}
 N_x^{(2)} \left( \pi\left( \bm{\sigma}^{(-1)} \right) \right) &=& \tilde{N}_x^{(1)} \left( \bm{\sigma}^{(-1)} \right) \quad (\forall x\in\{0, 1\})
 \label{eq:same12}
\end{eqnarray}
holds.
On the other hand, Proposition \ref{prop:eq} implies that
\begin{eqnarray}
 N_x^{(1)} \left( \pi\left( \bm{\sigma}^{(-1)} \right) \right) &=& N_x^{(2)} \left( \pi\left( \bm{\sigma}^{(-1)} \right) \right) \quad (\forall x\in\{0, 1\})
 \label{eq:prop12-1}
\end{eqnarray}
or
\begin{eqnarray}
 N_x^{(1)} \left( \pi\left( \bm{\sigma}^{(-1)} \right) \right) &=& N_{1-x}^{(2)} \left( \pi\left( \bm{\sigma}^{(-1)} \right) \right) \quad (\forall x\in\{0, 1\})
 \label{eq:prop12-2}
\end{eqnarray}
must hold.
By combining Eq. (\ref{eq:same12}) and Eq. (\ref{eq:prop12-1}) or (\ref{eq:prop12-2}), we obtain Eq. (\ref{eq:symeq_property2-1}) or (\ref{eq:symeq_property2-2}).
%\qed
\end{proof}

Theorem \ref{th:symeq} provides a necessary condition for a deterministic strategy to form a symmetric mutual reinforcement learning equilibrium.
In particular, Eqs. (\ref{eq:symeq_property2-1}) and (\ref{eq:symeq_property2-2}) imply that the second row and the third row of $T_a$ cannot be independent of each other.
Explicitly, $T_a$ must be one of the following $8$ forms:
\begin{eqnarray}
 \left(
\begin{array}{cccc}
a_1 & b_1 & b_1 & a_2 \\
c_1 & c_1 & c_1 & c_1 \\
d_1 & d_1 & d_1 & d_1 \\
a_3 & b_2 & b_2 & a_4
\end{array}
\right), && \left(
\begin{array}{cccc}
a_1 & b_1 & b_1 & a_2 \\
c_1 & c_1 & c_1 & 1-c_1 \\
d_1 & d_1 & d_1 & 1-d_1 \\
a_3 & b_2 & b_2 & a_4
\end{array}
\right), \nonumber \\
\left(
\begin{array}{cccc}
a_1 & b_1 & b_1 & a_2 \\
c_1 & c_1 & 1-c_1 & c_1 \\
d_1 & 1-d_1 & d_1 & d_1 \\
a_3 & b_2 & b_2 & a_4
\end{array}
\right), && \left(
\begin{array}{cccc}
a_1 & b_1 & b_1 & a_2 \\
c_1 & c_1 & 1-c_1 & 1-c_1 \\
d_1 & 1-d_1 & d_1 & 1-d_1 \\
a_3 & b_2 & b_2 & a_4
\end{array}
\right), \nonumber \\
\left(
\begin{array}{cccc}
a_1 & b_1 & b_1 & a_2 \\
c_1 & 1-c_1 & c_1 & c_1 \\
d_1 & d_1 & 1-d_1 & d_1 \\
a_3 & b_2 & b_2 & a_4
\end{array}
\right), && \left(
\begin{array}{cccc}
a_1 & b_1 & b_1 & a_2 \\
c_1 & 1-c_1 & c_1 & 1-c_1 \\
d_1 & d_1 & 1-d_1 & 1-d_1 \\
a_3 & b_2 & b_2 & a_4
\end{array}
\right), \nonumber \\
\left(
\begin{array}{cccc}
a_1 & b_1 & b_1 & a_2 \\
c_1 & 1-c_1 & 1-c_1 & c_1 \\
d_1 & 1-d_1 & 1-d_1 & d_1 \\
a_3 & b_2 & b_2 & a_4
\end{array}
\right), && \left(
\begin{array}{cccc}
a_1 & b_1 & b_1 & a_2 \\
c_1 & 1-c_1 & 1-c_1 & 1-c_1 \\
d_1 & 1-d_1 & 1-d_1 & 1-d_1 \\
a_3 & b_2 & b_2 & a_4
\end{array}
\right),
\end{eqnarray}
where $a_i, b_j, c_1, d_1 \in \{ 0, 1 \}$ $(i=1,2,3,4)$, $(j=1,2)$ independently.
For example, the Tit-for-Tat-anti-Tit-for-Tat (TFT-ATFT) strategy \cite{YBC2017}
\begin{eqnarray}
 T_1\left( C \right) &=& \left(
\begin{array}{cccc}
1 & 1 & 1 & 1 \\
0 & 0 & 0 & 1 \\
0 & 1 & 0 & 1 \\
1 & 0 & 1 & 0
\end{array}
\right),
\end{eqnarray}
does not satisfy the condition of Theorem \ref{th:symeq}, and therefore it cannot form a symmetric mutual reinforcement learning equilibrium.
However, there are still many memory-two strategies which satisfy the necessary condition, and further refinement will be needed.

\section{Examples of deterministic strategies forming symmetric equilibrium}
\label{sec:example}
In this section, we provide three examples of memory-two deterministic strategies forming symmetric mutual reinforcement learning equilibrium.
For convenience, we define the following sixteen quantities:
\begin{eqnarray}
 q_{1} &:=& R + \gamma \max_\sigma Q^*_1 \left( \sigma, (C,C), (C,C) \right) \\
 q_{2} &:=& T + \gamma \max_\sigma Q^*_1 \left( \sigma, (D,C), (C,C) \right) \\
 q_{3} &:=& S + \gamma \max_\sigma Q^*_1 \left( \sigma, (C,D), (C,C) \right) \\
 q_{4} &:=& P + \gamma \max_\sigma Q^*_1 \left( \sigma, (D,D), (C,C) \right) 
\end{eqnarray}
\begin{eqnarray}
 q_{5} &:=& R + \gamma \max_\sigma Q^*_1 \left( \sigma, (C,C), (C,D) \right) \\
 q_{6} &:=& T + \gamma \max_\sigma Q^*_1 \left( \sigma, (D,C), (C,D) \right) \\
 q_{7} &:=& S + \gamma \max_\sigma Q^*_1 \left( \sigma, (C,D), (C,D) \right) \\
 q_{8} &:=& P + \gamma \max_\sigma Q^*_1 \left( \sigma, (D,D), (C,D) \right) 
\end{eqnarray}
\begin{eqnarray}
 q_{9} &:=& R + \gamma \max_\sigma Q^*_1 \left( \sigma, (C,C), (D,C) \right) \\
 q_{10} &:=& T + \gamma \max_\sigma Q^*_1 \left( \sigma, (D,C), (D,C) \right) \\
 q_{11} &:=& S + \gamma \max_\sigma Q^*_1 \left( \sigma, (C,D), (D,C) \right) \\
 q_{12} &:=& P + \gamma \max_\sigma Q^*_1 \left( \sigma, (D,D), (D,C) \right) 
\end{eqnarray}
\begin{eqnarray}
 q_{13} &:=& R + \gamma \max_\sigma Q^*_1 \left( \sigma, (C,C), (D,D) \right) \\
 q_{14} &:=& T + \gamma \max_\sigma Q^*_1 \left( \sigma, (D,C), (D,D) \right) \\
 q_{15} &:=& S + \gamma \max_\sigma Q^*_1 \left( \sigma, (C,D), (D,D) \right) \\
 q_{16} &:=& P + \gamma \max_\sigma Q^*_1 \left( \sigma, (D,D), (D,D) \right) 
\end{eqnarray}
The Bellman optimality equation for symmetric equilibrium is
\begin{eqnarray}
 && Q^*_1\left( \sigma_1, \bm{\sigma}^{(-1)}, \bm{\sigma}^{(-2)} \right) \nonumber \\
 &=& \sum_{\sigma_2} \left\{ r_1 \left( \bm{\sigma} \right) + \max_{\hat{\sigma}} Q^*_1 \left( \hat{\sigma}, \bm{\sigma}, \bm{\sigma}^{(-1)} \right) \right\}  \nonumber \\
 && \times \mathbb{I} \left( Q^*_1 \left( \sigma_2, \pi\left( \bm{\sigma}^{(-1)} \right), \pi\left(\bm{\sigma}^{(-2)} \right) \right) > Q^*_1 \left( \overline{\sigma}_2, \pi\left( \bm{\sigma}^{(-1)} \right), \pi\left(\bm{\sigma}^{(-2)} \right) \right) \right). \nonumber \\
 &&
 \label{BOE_mt_symeq}
\end{eqnarray}
We want to find solutions of this equation.

\subsection{Delayed Grim trigger strategy}
The first candidate of the solution of Eq. (\ref{BOE_mt_symeq}) is
\begin{eqnarray}
 T_1\left( C \right) = T_2\left( C \right) &=& \left(
\begin{array}{cccc}
1 & 0 & 0 & 0 \\
1 & 0 & 0 & 0 \\
1 & 0 & 0 & 0 \\
1 & 0 & 0 & 0
\end{array}
\right).
\label{eq:rGrim}
\end{eqnarray}
We can easily check that this strategy satisfies the necessary condition for symmetric equilibrium in Theorem \ref{th:symeq}.
Because this strategy is a variant of the Grim trigger strategy \cite{Fri1971}
\begin{eqnarray}
 T_1\left( C \right) &=& \left(
\begin{array}{cccc}
1 & 1 & 1 & 1 \\
0 & 0 & 0 & 0 \\
0 & 0 & 0 & 0 \\
0 & 0 & 0 & 0
\end{array}
\right)
\label{eq:Grim}
\end{eqnarray}
but uses only information at the second last action profile, the strategy (\ref{eq:rGrim}) can be called as \emph{delayed Grim} strategy.
\begin{theorem}
\label{th:rGrim}
A pair of the strategy (\ref{eq:rGrim}) forms a symmetric mutual reinforcement learning equilibrium if $\gamma>\sqrt{\frac{T-R}{T-P}}$.
\end{theorem}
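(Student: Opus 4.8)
The plan is to show that the pair in (\ref{eq:rGrim}) satisfies the Bellman optimality equation (\ref{eq:BOE_mt}) together with its support condition (\ref{eq:BOE_support_mt}), and that it is the \emph{same} strategy as itself in the sense of Definition \ref{def:same}. The latter is immediate: in (\ref{eq:rGrim}) the entry $T_a(C\mid\bm{\sigma}^{(-1)},\bm{\sigma}^{(-2)})$ equals $1$ iff $\bm{\sigma}^{(-2)}=(C,C)$ and does not otherwise depend on $\bm{\sigma}^{(-1)},\bm{\sigma}^{(-2)}$; since $\pi$ fixes $(C,C)$ and is a bijection on $\Omega$, we get $T_a(\sigma\mid\bm{\sigma}^{(-1)},\bm{\sigma}^{(-2)})=T_a(\sigma\mid\pi(\bm{\sigma}^{(-1)}),\pi(\bm{\sigma}^{(-2)}))$. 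Hence it suffices to prove that when player $2$ uses (\ref{eq:rGrim}) the unique optimal strategy of player $1$ is again (\ref{eq:rGrim}); the claim for player $2$ then follows by symmetry.

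The key observation is that when both players use (\ref{eq:rGrim}) the move of each player on round $t$ depends only on $\bm{\sigma}(t-2)$, so the rounds split into two parity classes which evolve independently, and this remains true even if player $1$ deviates, because the class of a given parity only sees actions played on rounds of that parity. Within one class the interaction is the memory-one Grim-trigger dynamics with effective discount $\gamma^2$: player $2$ cooperates on a class-round iff both players cooperated on the previous class-round (and unconditionally on the initial class-round only if $\bm{\sigma}^{(-2)}$, resp.\ $\bm{\sigma}^{(-1)}$, equals $(C,C)$), so after the first defection in a class, player $2$ defects in it forever, irrespective of player $1$. Solving the one-class optimality recursion gives, for a collapsed class, the value $W_D=P/(1-\gamma^2)$, uniquely attained by always defecting since $P>S$; and for a cooperating class, $W_C=\max\{R+\gamma^2W_C,\ T+\gamma^2W_D\}$, where under $\gamma>\sqrt{(T-R)/(T-P)}$, equivalently $\gamma^2>(T-R)/(T-P)$, the first term strictly dominates, so $W_C=R/(1-\gamma^2)$ and cooperating is the strictly unique optimal action in a cooperating class.

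Because the two classes decouple, the optimal action-value function should be additive over them, which leads to the ansatz
\begin{eqnarray}
 Q^*_1\left(\sigma_1,\bm{\sigma}^{(-1)},\bm{\sigma}^{(-2)}\right) &=& r_1(\sigma_1,\sigma_2^*) + \gamma\,u\left(\bm{\sigma}^{(-1)}\right) + \gamma^2\,w\left(\sigma_1,\bm{\sigma}^{(-2)}\right), \nonumber
\end{eqnarray}
where $\sigma_2^*=C$ iff $\bm{\sigma}^{(-2)}=(C,C)$ (player $2$'s current move), $u(\bm{\sigma}^{(-1)})=W_C$ iff $\bm{\sigma}^{(-1)}=(C,C)$ and $W_D$ otherwise (the odd-offset class resuming on round $t+1$), and $w(\sigma_1,\bm{\sigma}^{(-2)})=W_C$ iff $\sigma_1=C$ and $\bm{\sigma}^{(-2)}=(C,C)$ and $W_D$ otherwise (the even-offset class resuming on round $t+2$). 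The verification is then a finite computation in two parts. First, substituting this expression into the right-hand side of (\ref{eq:BOE_mt}) with $T_{-a}$ given by (\ref{eq:rGrim}) and using $W_C=R+\gamma^2W_C$, $W_D=P+\gamma^2W_D$ returns the left-hand side, so the ansatz solves (\ref{eq:BOE_mt}); since for $\gamma<1$ the Bellman optimality equation has a unique bounded solution, this is $Q^*_1$. Second, for each of the $16$ states one reads off the maximizer over $\sigma_1$: when $\bm{\sigma}^{(-2)}=(C,C)$ one has $Q^*_1(C,\cdot)-Q^*_1(D,\cdot)=(R-T)+\gamma^2(W_C-W_D)$, which is positive by the hypothesis, while when $\bm{\sigma}^{(-2)}\neq(C,C)$ the same difference equals $S-P<0$; in both cases the maximizer is unique and agrees with the prescription of (\ref{eq:rGrim}), so (\ref{eq:BOE_support_mt}) holds and the unique deterministic optimal strategy is exactly (\ref{eq:rGrim}).

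The main obstacle is recognizing this decoupling into two alternating Grim-trigger chains with discount $\gamma^2$ and writing down the corresponding closed form for $Q^*_1$; once this is in hand, everything reduces to the one-parameter threshold inequality already familiar from the memory-one analysis and a routine state-by-state check. The only point requiring care is confirming that a deviation of player $1$ cannot couple the two parity classes, which is exactly why the current action $\sigma_1$ enters the ansatz for $Q^*_1$ only through the term $w(\sigma_1,\bm{\sigma}^{(-2)})$.
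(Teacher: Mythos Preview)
Your proof is correct and takes a genuinely different route from the paper. The paper proceeds by brute force: it writes out the Bellman optimality equation against (\ref{eq:rGrim}) for all $32$ action-state pairs in terms of the auxiliary quantities $q_1,\dots,q_{16}$, imposes the eight strict inequalities $q_1>q_2$, $q_3<q_4$, etc.\ as self-consistency conditions, solves the resulting linear system explicitly, and finally checks that all eight inequalities reduce to $\gamma^2(T-P)>T-R$. There is no structural insight; the square root in the threshold simply emerges from the algebra.

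Your argument, by contrast, identifies the mechanism: because (\ref{eq:rGrim}) conditions only on $\bm{\sigma}^{(-2)}$, the even- and odd-indexed rounds decouple into two independent memory-one Grim chains with effective discount $\gamma^2$, which immediately explains why the threshold is the memory-one Grim threshold $(T-R)/(T-P)$ evaluated at $\gamma^2$. Your additive ansatz $r_1(\sigma_1,\sigma_2^*)+\gamma\,u(\bm{\sigma}^{(-1)})+\gamma^2\,w(\sigma_1,\bm{\sigma}^{(-2)})$ is exactly the paper's tabulated solution repackaged (one can check term by term that it reproduces their $q_1,\dots,q_{16}$), and your verification that it satisfies (\ref{eq:BOE_mt}) together with uniqueness of the bounded fixed point is a clean substitute for their linear solve. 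What you gain is explanatory power and a template that would transfer directly to the retarded WSLS case and to higher-order ``retarded'' variants; what the paper's approach gains is that it requires no insight---it is a mechanical recipe that could in principle be applied to any candidate strategy without first guessing its structure.
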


\begin{proof}
The Bellman optimality equation against the strategy (\ref{eq:rGrim}) is
\begin{eqnarray}
 Q^*_1 \left( C, (C,C), (C,C) \right) &=& q_{1} \\
 Q^*_1 \left( D, (C,C), (C,C) \right) &=& q_{2} \\
 Q^*_1 \left( C, (C,C), \bm{\sigma}^{(-2)} \right) &=& q_{3} \quad \left( \bm{\sigma}^{(-2)} \neq (C,C) \right) \\
 Q^*_1 \left( D, (C,C), \bm{\sigma}^{(-2)}  \right) &=& q_{4} \quad \left( \bm{\sigma}^{(-2)} \neq (C,C) \right)
\end{eqnarray}
\begin{eqnarray}
 Q^*_1 \left( C, (C,D), (C,C) \right) &=& q_{5} \\
 Q^*_1 \left( D, (C,D), (C,C) \right) &=& q_{6} \\
 Q^*_1 \left( C, (C,D), \bm{\sigma}^{(-2)} \right) &=& q_{7} \quad \left( \bm{\sigma}^{(-2)} \neq (C,C) \right) \\
 Q^*_1 \left( D, (C,D), \bm{\sigma}^{(-2)}  \right) &=& q_{8} \quad \left( \bm{\sigma}^{(-2)} \neq (C,C) \right)
\end{eqnarray}
\begin{eqnarray}
 Q^*_1 \left( C, (D,C), (C,C) \right) &=& q_{9} \\
 Q^*_1 \left( D, (D,C), (C,C) \right) &=& q_{10} \\
 Q^*_1 \left( C, (D,C), \bm{\sigma}^{(-2)} \right) &=& q_{11} \quad \left( \bm{\sigma}^{(-2)} \neq (C,C) \right) \\
 Q^*_1 \left( D, (D,C), \bm{\sigma}^{(-2)}  \right) &=& q_{12} \quad \left( \bm{\sigma}^{(-2)} \neq (C,C) \right)
\end{eqnarray}
\begin{eqnarray}
 Q^*_1 \left( C, (D,D), (C,C) \right) &=& q_{13} \\
 Q^*_1 \left( D, (D,D), (C,C) \right) &=& q_{14} \\
 Q^*_1 \left( C, (D,D), \bm{\sigma}^{(-2)} \right) &=& q_{15} \quad \left( \bm{\sigma}^{(-2)} \neq (C,C) \right) \\
 Q^*_1 \left( D, (D,D), \bm{\sigma}^{(-2)}  \right) &=& q_{16} \quad \left( \bm{\sigma}^{(-2)} \neq (C,C) \right)
\end{eqnarray}
with the self-consistency condition
\begin{eqnarray}
 q_{1} &>& q_{2} \nonumber \\
 q_{3} &<& q_{4} \nonumber \\
 q_{5} &>& q_{6} \nonumber \\
 q_{7} &<& q_{8} \nonumber \\
 q_{9} &>& q_{10} \nonumber \\
 q_{11} &<& q_{12} \nonumber \\
 q_{13} &>& q_{14} \nonumber \\
 q_{15} &<& q_{16}.
 \label{eq:self-consistent_rGrim}
\end{eqnarray}
The solution is
\begin{eqnarray}
 q_{1} &=& \frac{1}{1-\gamma} R \\
 q_{2} &=& T + \frac{\gamma}{1-\gamma^2} R + \frac{\gamma^2}{1-\gamma^2} P \\
 q_{3} &=& S + \frac{\gamma}{1-\gamma^2} R + \frac{\gamma^2}{1-\gamma^2} P \\
 q_{4} &=& \frac{1}{1-\gamma^2} P + \frac{\gamma}{1-\gamma^2} R \\
 q_{5} = q_{9} = q_{13} &=& \frac{1}{1-\gamma^2} R + \frac{\gamma}{1-\gamma^2} P \\
 q_{6} = q_{10} = q_{14} &=& T + \frac{\gamma}{1-\gamma} P \\
 q_{7} = q_{11} = q_{15} &=& S + \frac{\gamma}{1-\gamma} P \\
 q_{8} = q_{12} = q_{16} &=& \frac{1}{1-\gamma} P.
\end{eqnarray}
For these solution, the inequalities (\ref{eq:self-consistent_rGrim}) are satisfied if
\begin{eqnarray}
 \gamma &>& \sqrt{\frac{T-R}{T-P}}.
 \label{eq:condition_rGrim}
\end{eqnarray}
%\qed
\end{proof}

We remark that the condition (\ref{eq:condition_rGrim}) is more strict than the condition that Grim forms a symmetric equilibrium \cite{UsuUed2021}: $\gamma>\frac{T-R}{T-P}$.

\subsection{Delayed Win-Stay Lose-Shift strategy}
The second candidate of the solution of Eq. (\ref{BOE_mt_symeq}) is
\begin{eqnarray}
 T_1\left( C \right) = T_2\left( C \right) &=& \left(
\begin{array}{cccc}
1 & 0 & 0 & 1 \\
1 & 0 & 0 & 1 \\
1 & 0 & 0 & 1 \\
1 & 0 & 0 & 1
\end{array}
\right).
\label{eq:rWSLS}
\end{eqnarray}
We can easily check that this strategy satisfies the necessary condition for symmetric equilibrium in Theorem \ref{th:symeq}.
Because this strategy is a variant of the Win-Stay Lose-Shift (WSLS) strategy \cite{NowSig1993}
\begin{eqnarray}
 T_1\left( C \right) &=& \left(
\begin{array}{cccc}
1 & 1 & 1 & 1 \\
0 & 0 & 0 & 0 \\
0 & 0 & 0 & 0 \\
1 & 1 & 1 & 1
\end{array}
\right)
\label{eq:WSLS}
\end{eqnarray}
but uses only information at the second last action profile, the strategy (\ref{eq:rWSLS}) can be called as \emph{delayed WSLS} strategy.
\begin{theorem}
\label{th:rWSLS}
When $2R>T+P$ holds, a pair of the strategy (\ref{eq:rWSLS}) forms a symmetric mutual reinforcement learning equilibrium if $\gamma>\sqrt{\frac{T-R}{R-P}}$.
\end{theorem}

\begin{proof}
The Bellman optimality equation against the strategy (\ref{eq:rWSLS}) is
\begin{eqnarray}
 Q^*_1 \left( C, (C,C), \bm{\sigma}^{(-2)} \right) &=& q_{1} \quad \left( \bm{\sigma}^{(-2)} \in \{ (C,C), (D,D) \} \right) \\
 Q^*_1 \left( D, (C,C), \bm{\sigma}^{(-2)} \right) &=& q_{2} \quad \left( \bm{\sigma}^{(-2)} \in \{ (C,C), (D,D) \} \right) \\
 Q^*_1 \left( C, (C,C), \bm{\sigma}^{(-2)} \right) &=& q_{3} \quad \left( \bm{\sigma}^{(-2)} \in \{ (C,D), (D,C) \} \right) \\
 Q^*_1 \left( D, (C,C), \bm{\sigma}^{(-2)}  \right) &=& q_{4} \quad \left( \bm{\sigma}^{(-2)} \in \{ (C,D), (D,C) \} \right)
\end{eqnarray}
\begin{eqnarray}
 Q^*_1 \left( C, (C,D), \bm{\sigma}^{(-2)} \right) &=& q_{5} \quad \left( \bm{\sigma}^{(-2)} \in \{ (C,C), (D,D) \} \right) \\
 Q^*_1 \left( D, (C,D), \bm{\sigma}^{(-2)} \right) &=& q_{6} \quad \left( \bm{\sigma}^{(-2)} \in \{ (C,C), (D,D) \} \right) \\
 Q^*_1 \left( C, (C,D), \bm{\sigma}^{(-2)} \right) &=& q_{7} \quad \left( \bm{\sigma}^{(-2)} \in \{ (C,D), (D,C) \} \right) \\
 Q^*_1 \left( D, (C,D), \bm{\sigma}^{(-2)}  \right) &=& q_{8} \quad \left( \bm{\sigma}^{(-2)} \in \{ (C,D), (D,C) \} \right)
\end{eqnarray}
\begin{eqnarray}
 Q^*_1 \left( C, (D,C), \bm{\sigma}^{(-2)} \right) &=& q_{9} \quad \left( \bm{\sigma}^{(-2)} \in \{ (C,C), (D,D) \} \right) \\
 Q^*_1 \left( D, (D,C), \bm{\sigma}^{(-2)} \right) &=& q_{10} \quad \left( \bm{\sigma}^{(-2)} \in \{ (C,C), (D,D) \} \right) \\
 Q^*_1 \left( C, (D,C), \bm{\sigma}^{(-2)} \right) &=& q_{11} \quad \left( \bm{\sigma}^{(-2)} \in \{ (C,D), (D,C) \} \right) \\
 Q^*_1 \left( D, (D,C), \bm{\sigma}^{(-2)}  \right) &=& q_{12} \quad \left( \bm{\sigma}^{(-2)} \in \{ (C,D), (D,C) \} \right)
\end{eqnarray}
\begin{eqnarray}
 Q^*_1 \left( C, (D,D), \bm{\sigma}^{(-2)} \right) &=& q_{13} \quad \left( \bm{\sigma}^{(-2)} \in \{ (C,C), (D,D) \} \right) \\
 Q^*_1 \left( D, (D,D), \bm{\sigma}^{(-2)} \right) &=& q_{14} \quad \left( \bm{\sigma}^{(-2)} \in \{ (C,C), (D,D) \} \right) \\
 Q^*_1 \left( C, (D,D), \bm{\sigma}^{(-2)} \right) &=& q_{15} \quad \left( \bm{\sigma}^{(-2)} \in \{ (C,D), (D,C) \} \right) \\
 Q^*_1 \left( D, (D,D), \bm{\sigma}^{(-2)}  \right) &=& q_{16} \quad \left( \bm{\sigma}^{(-2)} \in \{ (C,D), (D,C) \} \right)
\end{eqnarray}
with the self-consistency condition
\begin{eqnarray}
 q_{1} &>& q_{2} \nonumber \\
 q_{3} &<& q_{4} \nonumber \\
 q_{5} &>& q_{6} \nonumber \\
 q_{7} &<& q_{8} \nonumber \\
 q_{9} &>& q_{10} \nonumber \\
 q_{11} &<& q_{12} \nonumber \\
 q_{13} &>& q_{14} \nonumber \\
 q_{15} &<& q_{16}.
 \label{eq:self-consistent_rWSLS}
\end{eqnarray}
The solution is
\begin{eqnarray}
 q_{1} = q_{13} &=& \frac{1}{1-\gamma} R \\
 q_{2} = q_{14} &=& T + \gamma R + \gamma^2 P + \frac{\gamma^3}{1-\gamma} R \\
 q_{3} = q_{15} &=& S + \gamma R + \gamma^2 P + \frac{\gamma^3}{1-\gamma} R \\
 q_{4} = q_{16} &=& P + \frac{\gamma}{1-\gamma} R \\
 q_{5} = q_{9} &=& R + \gamma P + \frac{\gamma^2}{1-\gamma} R \\
 q_{6} = q_{10} &=& T + \gamma P + \gamma^2 P + \frac{\gamma^3}{1-\gamma} R \\
 q_{7} = q_{11} &=& S + \gamma P + \gamma^2 P + \frac{\gamma^3}{1-\gamma} R \\
 q_{8} = q_{12} &=& P + \gamma P + \frac{\gamma^2}{1-\gamma} R.
\end{eqnarray}
For these solution, the inequalities (\ref{eq:self-consistent_rWSLS}) are satisfied if
\begin{eqnarray}
 \gamma &>& \sqrt{\frac{T-R}{R-P}}.
 \label{eq:condition_rWSLS}
\end{eqnarray}
It should be noted that such $\gamma<1$ exists only if $2R>T+P$.
%\qed
\end{proof}

We remark that the condition (\ref{eq:condition_rWSLS}) is more strict than the condition that WSLS forms a symmetric equilibrium \cite{UsuUed2021}: $\gamma>\frac{T-R}{R-P}$.

\subsection{All-or-None strategy}
The third example of the solution of Eq. (\ref{BOE_mt_symeq}) is the All-or-None strategy $AON_2$ \cite{HMCN2017}:
\begin{eqnarray}
 T_1\left( C \right) = T_2\left( C \right) &=& \left(
\begin{array}{cccc}
1 & 0 & 0 & 1 \\
0 & 0 & 0 & 0 \\
0 & 0 & 0 & 0 \\
1 & 0 & 0 & 1
\end{array}
\right).
\label{eq:AON2}
\end{eqnarray}
We can easily check that this strategy satisfies the necessary condition for symmetric equilibrium in Theorem \ref{th:symeq}.
It has been known that $AON_2$ forms subgame perfect equilibrium \cite{HMCN2017}.
A similar strategy as $AON_2$ was also observed in numerical simulation of evolution of cooperation \cite{HauSch1997}.

\begin{theorem}
\label{th:AON2}
When $3R-2P-T>0$ and $2R-3P+S>0$ hold, a pair of the strategy (\ref{eq:AON2}) forms a symmetric mutual reinforcement learning equilibrium if
\begin{eqnarray}
 \gamma &>& \max \left\{ \frac{1}{2} \left( \sqrt{\frac{4T-3R-P}{R-P}} - 1 \right), \frac{1}{2} \left( \sqrt{\frac{R+3P-4S}{R-P}} - 1 \right) \right\}.
 \label{eq:condition_AON2}
\end{eqnarray}
\end{theorem}

\begin{proof}
The Bellman optimality equation against the strategy (\ref{eq:AON2}) is
\begin{eqnarray}
 Q^*_1 \left( C, (C,C), \bm{\sigma}^{(-2)} \right) &=& q_{1} \quad \left( \bm{\sigma}^{(-2)} \in \{ (C,C), (D,D) \} \right) \\
 Q^*_1 \left( D, (C,C), \bm{\sigma}^{(-2)} \right) &=& q_{2} \quad \left( \bm{\sigma}^{(-2)} \in \{ (C,C), (D,D) \} \right) \\
 Q^*_1 \left( C, (C,C), \bm{\sigma}^{(-2)} \right) &=& q_{3} \quad \left( \bm{\sigma}^{(-2)} \in \{ (C,D), (D,C) \} \right) \\
 Q^*_1 \left( D, (C,C), \bm{\sigma}^{(-2)}  \right) &=& q_{4} \quad \left( \bm{\sigma}^{(-2)} \in \{ (C,D), (D,C) \} \right)
\end{eqnarray}
\begin{eqnarray}
 Q^*_1 \left( C, (C,D), \bm{\sigma}^{(-2)} \right) &=& q_{7} \quad \left( \bm{\sigma}^{(-2)} \in \Omega \right) \\
 Q^*_1 \left( D, (C,D), \bm{\sigma}^{(-2)}  \right) &=& q_{8} \quad \left( \bm{\sigma}^{(-2)} \in \Omega \right)
\end{eqnarray}
\begin{eqnarray}
 Q^*_1 \left( C, (D,C), \bm{\sigma}^{(-2)} \right) &=& q_{11} \quad \left( \bm{\sigma}^{(-2)} \in \Omega \right) \\
 Q^*_1 \left( D, (D,C), \bm{\sigma}^{(-2)}  \right) &=& q_{12} \quad \left( \bm{\sigma}^{(-2)} \in \Omega \right)
\end{eqnarray}
\begin{eqnarray}
 Q^*_1 \left( C, (D,D), \bm{\sigma}^{(-2)} \right) &=& q_{13} \quad \left( \bm{\sigma}^{(-2)} \in \{ (C,C), (D,D) \} \right) \\
 Q^*_1 \left( D, (D,D), \bm{\sigma}^{(-2)} \right) &=& q_{14} \quad \left( \bm{\sigma}^{(-2)} \in \{ (C,C), (D,D) \} \right) \\
 Q^*_1 \left( C, (D,D), \bm{\sigma}^{(-2)} \right) &=& q_{15} \quad \left( \bm{\sigma}^{(-2)} \in \{ (C,D), (D,C) \} \right) \\
 Q^*_1 \left( D, (D,D), \bm{\sigma}^{(-2)}  \right) &=& q_{16} \quad \left( \bm{\sigma}^{(-2)} \in \{ (C,D), (D,C) \} \right)
\end{eqnarray}
with the self-consistency condition
\begin{eqnarray}
 q_{1} &>& q_{2} \nonumber \\
 q_{3} &<& q_{4} \nonumber \\
 q_{7} &<& q_{8} \nonumber \\
 q_{11} &<& q_{12} \nonumber \\
 q_{13} &>& q_{14} \nonumber \\
 q_{15} &<& q_{16}.
 \label{eq:self-consistent_AON2}
\end{eqnarray}
The solution is
\begin{eqnarray}
 q_{1} = q_{13} &=& \frac{1}{1-\gamma} R \\
 q_{2} = q_{14} &=& T + \gamma P + \gamma^2 P + \frac{\gamma^3}{1-\gamma} R \\
 q_{3} = q_{7} = q_{11} = q_{15} &=& S + \gamma P + \gamma^2 P + \frac{\gamma^3}{1-\gamma} R \\
 q_{4} = q_{16} &=& P + \frac{\gamma}{1-\gamma} R \\
 q_{8} = q_{12} &=& P + \gamma P + \frac{\gamma^2}{1-\gamma} R.
\end{eqnarray}
For these solution, the inequalities (\ref{eq:self-consistent_AON2}) are satisfied if
\begin{eqnarray}
 (R-P) \gamma^2 + (R-P) \gamma - (T-R) &>& 0
\end{eqnarray}
and
\begin{eqnarray}
 (R-P) \gamma^2 + (R-P) \gamma - (P-S) &>& 0,
\end{eqnarray}
which are equivalent to Eq. (\ref{eq:condition_AON2}) for $\gamma \geq 0$.
It should be noted that such $\gamma<1$ exists only if $3R-2P-T>0$ and $2R-3P+S>0$.
%\qed
\end{proof}

\section{Optimality in longer memory}
\label{sec:optimality}
In previous sections, we investigated symmetric equilibrium of mutual reinforcement learning when both players use memory-two strategies, and obtained three examples of deterministic strategies forming symmetric equilibrium.
A natural question is ``Do these strategies forming symmetric equilibrium in memory-two reinforcement learning also form symmetric equilibrium of mutual reinforcement learning of longer memory strategies?''.
In this section, we show that the answer to this question is ``yes''.

We first prove the following theorem.
\begin{theorem}
\label{th:longer}
Let $T_{-a}$ be a memory-$n^\prime$ strategy of player $-a$.
Let $T^*_a$ be the optimal strategy of player $a$ against $T_{-a}$ when player $a$ use reinforcement learning of memory-$n^\prime$ strategies.
When player $a$ use reinforcement learning of memory-$n$ strategies with $n>n^\prime$ to obtain the optimal strategy $\check{T}^*_a$ against $T_{-a}$, then $\check{T}^*_a=T^*_a$.
\end{theorem}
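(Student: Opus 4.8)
The plan is to prove the sharper statement that the memory-$n$ optimal action-value function of player $a$ against $T_{-a}$ does not depend on the oldest $n-n^\prime$ recorded states and in fact coincides with the memory-$n^\prime$ optimal action-value function $Q^*_a$; the equality $\check{T}^*_a=T^*_a$ then drops out of the support characterization (\ref{eq:BOE_support}). The intuition is that, since $T_{-a}$ is memory-$n^\prime$, the decision problem faced by player $a$ is a discounted Markov decision process whose state is the last $n^\prime$ rounds $\left[ \bm{\sigma}^{(-m)} \right]_{m=1}^{n^\prime}\in\Omega^{n^\prime}$, for which a policy depending only on that state is already optimal; recording more of the past cannot help. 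I would make this precise entirely through the Bellman optimality equation (\ref{eq:bellman}), to stay within the framework of the paper.

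First I would recall that the solution of the memory-$n$ Bellman optimality equation (\ref{eq:bellman}) is unique: the map sending an action-value function on the finite set $\{C,D\}\times\Omega^n$ to the right-hand side of (\ref{eq:bellman}) is a $\gamma$-contraction in the supremum norm (because $0\le\gamma<1$ and the max and the convex combination over $\sigma_{-a}$ are nonexpansive), so it has a unique fixed point, which is the memory-$n$ optimal action-value function $\check{Q}^*_a$. Next I would introduce the candidate
\begin{eqnarray}
 \check{Q}_a\left( \sigma_a, \left[ \bm{\sigma}^{(-m)} \right]_{m=1}^{n} \right) &:=& Q^*_a\left( \sigma_a, \left[ \bm{\sigma}^{(-m)} \right]_{m=1}^{n^\prime} \right),
\end{eqnarray}
i.e.\ the memory-$n^\prime$ optimal action-value function read as a function of the most recent $n^\prime$ states only, and verify that it satisfies (\ref{eq:bellman}) for memory length $n$. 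Substituting it into the right-hand side of (\ref{eq:bellman}) and using that $T_{-a}$ is memory-$n^\prime$, so $T_{-a}(\sigma_{-a}|\left[ \bm{\sigma}^{(-m)} \right]_{m=1}^{n})=T_{-a}(\sigma_{-a}|\left[ \bm{\sigma}^{(-m)} \right]_{m=1}^{n^\prime})$, the immediate-reward term becomes exactly that of the memory-$n^\prime$ equation. In the continuation term, $\check{Q}_a$ evaluated at the memory-$n$ continuation state $(\bm{\sigma},\left[ \bm{\sigma}^{(-m)} \right]_{m=1}^{n-1})$ depends only on its first $n^\prime$ entries $(\bm{\sigma},\left[ \bm{\sigma}^{(-m)} \right]_{m=1}^{n^\prime-1})$, which is precisely the continuation state appearing in the memory-$n^\prime$ Bellman optimality equation (here we use $n>n^\prime\ge1$). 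Hence the whole right-hand side collapses to the right-hand side of the memory-$n^\prime$ version of (\ref{eq:bellman}) evaluated at $(\sigma_a,\left[ \bm{\sigma}^{(-m)} \right]_{m=1}^{n^\prime})$, which by hypothesis equals $Q^*_a(\sigma_a,\left[ \bm{\sigma}^{(-m)} \right]_{m=1}^{n^\prime})=\check{Q}_a(\sigma_a,\left[ \bm{\sigma}^{(-m)} \right]_{m=1}^{n})$. So $\check{Q}_a$ is a fixed point, and by uniqueness $\check{Q}^*_a=\check{Q}_a$.

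Finally I would conclude via (\ref{eq:BOE_support}): for every $\left[ \bm{\sigma}^{(-m)} \right]_{m=1}^{n}$,
$\mathrm{supp}\,\check{T}^*_a(\cdot|\left[ \bm{\sigma}^{(-m)} \right]_{m=1}^{n})=\arg\max_\sigma\check{Q}^*_a(\sigma,\left[ \bm{\sigma}^{(-m)} \right]_{m=1}^{n})=\arg\max_\sigma Q^*_a(\sigma,\left[ \bm{\sigma}^{(-m)} \right]_{m=1}^{n^\prime})=\mathrm{supp}\,T^*_a(\cdot|\left[ \bm{\sigma}^{(-m)} \right]_{m=1}^{n^\prime})$,
so $\check{T}^*_a$ ignores the oldest $n-n^\prime$ states and, for deterministic $T^*_a$, coincides with $T^*_a$ exactly (in general, at least their optimal action-value functions coincide). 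I expect the only non-mechanical points to be (i) invoking uniqueness of the fixed point of the Bellman optimality operator, so that exhibiting one solution is enough, and (ii) carefully bookkeeping the truncation of history in the continuation term — checking that in the memory-$n$ recursion dropping $\bm{\sigma}^{(-n)}$ and prepending $\bm{\sigma}$, then truncating to $n^\prime$ states, reproduces the memory-$n^\prime$ recursion. Everything else is routine substitution.
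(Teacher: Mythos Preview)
Your proof is correct, and it reaches the same conclusion as the paper, but by a genuinely different route. The paper works \emph{top-down}: it starts from the memory-$n$ optimal action-value function $Q^*_a\left(\sigma_a,\left[\bm{\sigma}^{(-m)}\right]_{m=1}^{n}\right)$, observes that the right-hand side of the memory-$n$ Bellman optimality equation against a memory-$n^\prime$ opponent does not involve $\bm{\sigma}^{(-n)}$, and hence peels off the oldest recorded state to obtain an equation for a function of $n-1$ states; it then iterates this argument down to $n^\prime$. Your approach is \emph{bottom-up}: you lift the memory-$n^\prime$ solution $Q^*_a$ to a function on $\{C,D\}\times\Omega^n$ that ignores the oldest $n-n^\prime$ states, verify in one step that this lift satisfies the memory-$n$ Bellman optimality equation, and then invoke uniqueness of the fixed point (via the $\gamma$-contraction property) to identify it with $\check{Q}^*_a$. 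The paper's induction avoids stating the contraction argument explicitly but still implicitly relies on uniqueness at the final identification step; your version makes that dependence transparent and replaces the $n-n^\prime$ iterations by a single substitution. Both approaches land on the same support computation via (\ref{eq:BOE_support}) to conclude $\check{T}^*_a=T^*_a$.
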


\begin{proof}
When player $-a$ use memory-$n^\prime$ strategy and player $a$ use memory-$n$ reinforcement learning with $n>n^\prime$, the Bellman optimality equation (\ref{eq:bellman}) becomes
\begin{eqnarray}
 && Q^*_a\left( \sigma_a, \left[ \bm{\sigma}^{(-m)} \right]_{m=1}^{n} \right) \nonumber \\
 &=& \sum_{\sigma_{-a}} r_a \left( \bm{\sigma} \right) T_{-a} \left( \left. \sigma_{-a} \right| \left[ \bm{\sigma}^{(-m)} \right]_{m=1}^{n^\prime} \right) \nonumber \\
 && + \gamma \sum_{\sigma_{-a}} T_{-a} \left( \left. \sigma_{-a} \right| \left[ \bm{\sigma}^{(-m)} \right]_{m=1}^{n^\prime} \right) \max_{\hat{\sigma}} Q^*_a \left( \hat{\sigma}, \bm{\sigma}, \left[ \bm{\sigma}^{(-m)} \right]_{m=1}^{n-1} \right). \nonumber \\
 &&
\end{eqnarray}
Then, we find that the right-hand side does not depend on $\bm{\sigma}^{(-n)}$, and therefore
\begin{eqnarray}
 Q^*_a\left( \sigma_a, \left[ \bm{\sigma}^{(-m)} \right]_{m=1}^{n} \right) &=& Q^*_a\left( \sigma_a, \left[ \bm{\sigma}^{(-m)} \right]_{m=1}^{n-1} \right)
\end{eqnarray}
Then, the Bellman optimality equation becomes
\begin{eqnarray}
 && Q^*_a\left( \sigma_a, \left[ \bm{\sigma}^{(-m)} \right]_{m=1}^{n-1} \right) \nonumber \\
 &=& \sum_{\sigma_{-a}} r_a \left( \bm{\sigma} \right) T_{-a} \left( \left. \sigma_{-a} \right| \left[ \bm{\sigma}^{(-m)} \right]_{m=1}^{n^\prime} \right) \nonumber \\
 && + \gamma \sum_{\sigma_{-a}} T_{-a} \left( \left. \sigma_{-a} \right| \left[ \bm{\sigma}^{(-m)} \right]_{m=1}^{n^\prime} \right) \max_{\hat{\sigma}} Q^*_a \left( \hat{\sigma}, \bm{\sigma}, \left[ \bm{\sigma}^{(-m)} \right]_{m=1}^{n-2} \right). \nonumber \\
 &&
\end{eqnarray}
By repeating the same argument until the length of memory decreases to $n^\prime$, we find that
\begin{eqnarray}
 Q^*_a\left( \sigma_a, \left[ \bm{\sigma}^{(-m)} \right]_{m=1}^{n} \right) &=& Q^*_a\left( \sigma_a, \left[ \bm{\sigma}^{(-m)} \right]_{m=1}^{n^\prime} \right),
\end{eqnarray}
which implies that $\check{T}^*_a=T^*_a$.
%\qed
\end{proof}
That is, when the opponent $-a$ uses a memory-two strategy $T_{-a}$, and player $a$ learns the optimal memory-$n$ strategy with $n\geq 2$ against $T_{-a}$, then, such optimal strategy is memory-two.
Similarly, when the opponent $-a$ uses a memory-one strategy, and player $a$ learns the optimal memory-$n$ strategy with $n\geq 1$, then, such optimal strategy is memory-one.

This theorem results in the following corollary.
\begin{corollary}
\label{cor:eq_longer}
A mutual reinforcement learning equilibrium obtained by memory-$n^\prime$ reinforcement learning is also a mutual reinforcement learning equilibrium obtained by memory-$n$ reinforcement learning with $n>n^\prime$.
\end{corollary}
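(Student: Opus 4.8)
The plan is to obtain the corollary as an immediate packaging of Theorem \ref{th:longer}. Let $(T_1,T_2)$ be a mutual reinforcement learning equilibrium obtained by memory-$n^\prime$ reinforcement learning; by the definition of such an equilibrium, $T_a$ is the optimal memory-$n^\prime$ strategy of player $a$ against $T_{-a}$ for $a=1,2$. First I would note that any memory-$n^\prime$ strategy can be read as a memory-$n$ strategy that ignores the states $\bm{\sigma}^{(-n^\prime-1)},\dots,\bm{\sigma}^{(-n)}$, so $T_1$ and $T_2$ are legitimate memory-$n$ strategies inducing the same dynamics.

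Next, fixing $a$ and letting player $a$ run memory-$n$ reinforcement learning against $T_{-a}$, I would invoke Theorem \ref{th:longer}: because $T_{-a}$ is a memory-$n^\prime$ strategy and $n>n^\prime$, the optimal memory-$n$ strategy $\check{T}^*_a$ against $T_{-a}$ equals the optimal memory-$n^\prime$ strategy against $T_{-a}$, which is exactly $T_a$. Doing this for both $a=1$ and $a=2$ shows that $T_a$ is optimal against $T_{-a}$ under memory-$n$ reinforcement learning for each $a$, i.e. $(T_1,T_2)$ is a mutual reinforcement learning equilibrium obtained by memory-$n$ reinforcement learning.

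The only point needing a word of care is well-definedness of ``the'' optimal strategy when several maximizing actions tie: the Bellman optimality equation together with the support condition (\ref{eq:BOE_support}) determines only the set of maximizers in each state. Theorem \ref{th:longer}, however, actually yields the equality of optimal action-value functions $Q^*_a\!\left(\sigma_a,\left[\bm{\sigma}^{(-m)}\right]_{m=1}^{n}\right)=Q^*_a\!\left(\sigma_a,\left[\bm{\sigma}^{(-m)}\right]_{m=1}^{n^\prime}\right)$, so the sets of maximizers --- hence the sets of optimal strategies --- coincide across the two horizons, and in the deterministic case central to this paper the equilibrium strategy stays the unique optimum. I do not anticipate any genuine obstacle: all the work is in Theorem \ref{th:longer}, and the corollary is just the bookkeeping that lifts its single-player conclusion to the pair $(T_1,T_2)$.
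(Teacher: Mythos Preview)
Your proposal is correct and matches the paper's approach: the paper states the corollary as an immediate consequence of Theorem~\ref{th:longer} without further proof, and your argument simply spells out that bookkeeping (applying the theorem once for each player). The extra remark on ties is a reasonable addition but not needed for the paper's purposes.
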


Therefore, the strategies (\ref{eq:rGrim}) and (\ref{eq:rWSLS}) in the previous section also form mutual reinforcement learning equilibria even if players use memory-$n$ reinforcement learning with $n>2$.
Similarly, the (memory-one) Grim strategy and the (memory-one) WSLS strategy still form mutual reinforcement learning equilibria even if players use memory-two reinforcement learning, since it has been known that Grim and WSLS form memory-one mutual reinforcement learning equilibria, respectively \cite{UsuUed2021}.
We remark that this property is similar to that of Nash equilibrium in finite automaton selection game, which claims that two automata must have an equal number of states in equilibria \cite{AbrRub1988}.

\section{Conclusion}
\label{sec:conclusion}
In this paper, we investigated symmetric equilibrium of mutual reinforcement learning when both players use memory-two deterministic strategies in the repeated prisoners' dilemma game.
First, we find that the structure of the optimal strategies is constrained by the Bellman optimality equation (Proposition \ref{prop:structure}).
Then, we find a necessary condition for deterministic symmetric equilibrium of mutual reinforcement learning (Theorem \ref{th:symeq}).
Furthermore, we provided three examples of memory-two deterministic strategies which form symmetric mutual reinforcement learning equilibrium, some of which can be regarded as variants of the Grim strategy and the WSLS strategy (Theorem \ref{th:rGrim}, Theorem \ref{th:rWSLS} and Theorem \ref{th:AON2}).
Finally, we proved that mutual reinforcement learning equilibria achieved by memory-two strategies are also mutual reinforcement learning equilibria when both players use reinforcement learning of memory-$n$ strategies with $n>2$ (Theorem \ref{th:longer}).

We want to investigate whether other symmetric mutual reinforcement learning equilibria of deterministic memory-two strategies exist or not in future.
For such purpose, novel methods are needed, because the number of strategies is quite large.
Furthermore, extension of our analysis to (i) asymmetric equilibrium and (ii) mixed strategies is also a subject of future work.
Ultimately, if we would develop some method to find all equilibria in all length of memory $n$, analysis of mutual reinforcement learning equilibria is completed.

\section*{Acknowledgement}
We thank Genki Ichinose and Mashiho Mukaida for useful discussions.
This study was supported by JSPS KAKENHI Grant Number JP20K19884 and Inamori Research Grants.

\appendix
\section{Derivation of Eq. (\ref{eq:BE})}
\label{app:BE}
First we introduce the notation
\begin{eqnarray}
 T\left( \left. \bm{\sigma} \right| \left[ \bm{\sigma}^{(-m)} \right]_{m=1}^{n} \right) &:=& \prod_{a=1}^2 T_a\left( \left. \sigma_a \right| \left[ \bm{\sigma}^{(-m)} \right]_{m=1}^{n} \right).
\end{eqnarray}
We remark that the joint probability distribution of the action profiles $\bm{\sigma}^{(\mu)}$, $\cdots$, $\bm{\sigma}^{(0)}$ given $\left[ \bm{\sigma}^{(-m)} \right]_{m=1}^{n}$ is described as
\begin{eqnarray}
 P\left( \bm{\sigma}^{(\mu)}, \cdots, \bm{\sigma}^{(0)} \left| \left[ \bm{\sigma}^{(-m)} \right]_{m=1}^{n} \right. \right) &=& \prod_{s=0}^\mu T\left( \left. \bm{\sigma}^{(s)} \right| \left[ \bm{\sigma}^{(s-m)} \right]_{m=1}^{n} \right).
\end{eqnarray}
The action-value function (\ref{eq:def_Q}) is rewritten as
\begin{eqnarray}
&& Q_a\left( \sigma^{(0)}_a, \left[ \bm{\sigma}^{(-m)} \right]_{m=1}^{n} \right) \nonumber \\
&=& \sum_{\left[ \bm{\sigma}^{(s)} \right]_{s=1}^{\infty}} \sum_{\sigma^{(0)}_{-a}} \sum_{k=0}^\infty \gamma^k r_a\left( \bm{\sigma}^{(k)} \right) \left\{ \prod_{s=1}^\infty T\left( \left. \bm{\sigma}^{(s)} \right| \left[ \bm{\sigma}^{(s-m)} \right]_{m=1}^{n} \right) \right\} \nonumber \\
 && \times T_{-a}\left( \left. \sigma^{(0)}_{-a} \right| \left[ \bm{\sigma}^{(-m)} \right]_{m=1}^{n} \right) \nonumber \\
 &=& \sum_{\left[ \bm{\sigma}^{(s)} \right]_{s=1}^{\infty}} \sum_{\sigma^{(0)}_{-a}} \left[ r_a\left( \bm{\sigma}^{(0)} \right) + \gamma \sum_{k=0}^\infty \gamma^k r_a\left( \bm{\sigma}^{(k+1)} \right) \right] \nonumber \\
 && \times \left\{ \prod_{s=1}^\infty T\left( \left. \bm{\sigma}^{(s)} \right| \left[ \bm{\sigma}^{(s-m)} \right]_{m=1}^{n} \right) \right\} T_{-a}\left( \left. \sigma^{(0)}_{-a} \right| \left[ \bm{\sigma}^{(-m)} \right]_{m=1}^{n} \right) \nonumber \\
 &=& \sum_{\sigma^{(0)}_{-a}} r_a\left( \bm{\sigma}^{(0)} \right) T_{-a}\left( \left. \sigma^{(0)}_{-a} \right| \left[ \bm{\sigma}^{(-m)} \right]_{m=1}^{n} \right) \nonumber \\
 && + \gamma \sum_{\left[ \bm{\sigma}^{(s)} \right]_{s=1}^{\infty}} \sum_{\sigma^{(0)}_{-a}} \sum_{k=0}^\infty \gamma^k r_a\left( \bm{\sigma}^{(k+1)} \right) \left\{ \prod_{s=2}^\infty T\left( \left. \bm{\sigma}^{(s)} \right| \left[ \bm{\sigma}^{(s-m)} \right]_{m=1}^{n} \right) \right\} \nonumber \\
 && \times T_{-a}\left( \left. \sigma^{(1)}_{-a} \right| \left[ \bm{\sigma}^{(1-m)} \right]_{m=1}^{n} \right) T_{a}\left( \left. \sigma^{(1)}_{a} \right| \left[ \bm{\sigma}^{(1-m)} \right]_{m=1}^{n} \right) T_{-a}\left( \left. \sigma^{(0)}_{-a} \right| \left[ \bm{\sigma}^{(-m)} \right]_{m=1}^{n} \right) \nonumber \\
 &=& \sum_{\sigma^{(0)}_{-a}} r_a\left( \bm{\sigma}^{(0)} \right) T_{-a}\left( \left. \sigma^{(0)}_{-a} \right| \left[ \bm{\sigma}^{(-m)} \right]_{m=1}^{n} \right) \nonumber \\
 && + \gamma \sum_{\sigma^{(1)}_{a}} \sum_{\sigma^{(0)}_{-a}} Q_a\left( \sigma^{(1)}_a, \left[ \bm{\sigma}^{(1-m)} \right]_{m=1}^{n} \right) T_{a}\left( \left. \sigma^{(1)}_{a} \right| \left[ \bm{\sigma}^{(1-m)} \right]_{m=1}^{n} \right) T_{-a}\left( \left. \sigma^{(0)}_{-a} \right| \left[ \bm{\sigma}^{(-m)} \right]_{m=1}^{n} \right), \nonumber \\
 &&
\end{eqnarray}
which is Eq. (\ref{eq:BE}).

\section{Derivation of Eqs. (\ref{eq:bellman}) and (\ref{eq:BOE_support})}
\label{app:BOE}
We define $Q^*_a$ as the optimal value of $Q_a$, which is obtained by choosing optimal policy $T^*_a$.
Then, $Q^*_a$ obeys
\begin{eqnarray}
 && Q^*_a\left( \sigma_a, \left[ \bm{\sigma}^{(-m)} \right]_{m=1}^{n} \right) \nonumber \\
 &=& \sum_{\sigma_{-a}} r_a \left( \bm{\sigma} \right) T_{-a} \left( \left. \sigma_{-a} \right| \left[ \bm{\sigma}^{(-m)} \right]_{m=1}^{n} \right) \nonumber \\
 && + \gamma \sum_{\sigma_{a}^{\prime}} \sum_{\sigma_{-a}} T^*_{a} \left( \left. \sigma_{a}^{\prime} \right| \bm{\sigma}, \left[ \bm{\sigma}^{(-m)} \right]_{m=1}^{n-1} \right) T_{-a} \left( \left. \sigma_{-a} \right| \left[ \bm{\sigma}^{(-m)} \right]_{m=1}^{n} \right) Q^*_a \left( \sigma_{a}^{\prime}, \bm{\sigma}, \left[ \bm{\sigma}^{(-m)} \right]_{m=1}^{n-1} \right) \nonumber \\
 &\leq& \sum_{\sigma_{-a}} r_a \left( \bm{\sigma} \right) T_{-a} \left( \left. \sigma_{-a} \right| \left[ \bm{\sigma}^{(-m)} \right]_{m=1}^{n} \right) \nonumber \\
 && + \gamma \sum_{\sigma_{a}^{\prime}} \sum_{\sigma_{-a}} T^*_{a} \left( \left. \sigma_{a}^{\prime} \right| \bm{\sigma}, \left[ \bm{\sigma}^{(-m)} \right]_{m=1}^{n-1} \right) T_{-a} \left( \left. \sigma_{-a} \right| \left[ \bm{\sigma}^{(-m)} \right]_{m=1}^{n} \right) \max_{\hat{\sigma}} Q^*_a \left( \hat{\sigma}, \bm{\sigma}, \left[ \bm{\sigma}^{(-m)} \right]_{m=1}^{n-1} \right) \nonumber \\
 &=& \sum_{\sigma_{-a}} r_a \left( \bm{\sigma} \right) T_{-a} \left( \left. \sigma_{-a} \right| \left[ \bm{\sigma}^{(-m)} \right]_{m=1}^{n} \right) \nonumber \\
 && + \gamma \sum_{\sigma_{-a}} T_{-a} \left( \left. \sigma_{-a} \right| \left[ \bm{\sigma}^{(-m)} \right]_{m=1}^{n} \right) \max_{\hat{\sigma}} Q^*_a \left( \hat{\sigma}, \bm{\sigma}, \left[ \bm{\sigma}^{(-m)} \right]_{m=1}^{n-1} \right).
\end{eqnarray}
The equality in the third line holds when
\begin{eqnarray}
 \mathrm{supp}T^*_a\left( \left. \cdot \right| \left[ \bm{\sigma}^{(-m)} \right]_{m=1}^{n} \right) &=& \arg \max_{\sigma} Q^*_a \left( \sigma, \left[ \bm{\sigma}^{(-m)} \right]_{m=1}^{n} \right),
\end{eqnarray}
which is Eq. (\ref{eq:BOE_support}).

\section*{References}

\bibliography{mtRL}

\end{document}